\documentclass[10pt,a4paper,twocolumn]{article}
\pdfoutput=1
\usepackage[left=1.76cm, right=1.76cm, top=2.50cm, bottom=2.50cm]{geometry}
\usepackage{titlesec}

\usepackage[T1]{fontenc}
\usepackage[utf8]{inputenc}
\usepackage{amsmath,amssymb,amsthm,mathtools}
\usepackage[libertine]{newtxmath}
\usepackage{microtype}
\usepackage{graphicx}

\theoremstyle{plain}
\newtheorem{theorem}{Theorem}

\newtheorem{proposition}{Proposition}

\theoremstyle{definition}
\newtheorem{definition}{Definition}
\newtheorem{assumption}{Assumption}
\newtheorem{remark}{Remark}

\DeclareMathOperator*{\esssup}{ess\,sup}

\pagestyle{empty}

\begin{document}

	\title{\textbf{Robust stability of moving horizon estimation for continuous-time systems}}
	\author{Julian D. Schiller \and Matthias A. Müller}
	\date{%
		Leibniz University Hannover, Institute of Automatic Control\\%
		\textit{E-mail: \{schiller,mueller\}@irt.uni-hannover.de} \\[2ex]
	}

\maketitle

  \abstract{
	We consider a moving horizon estimation (MHE) scheme involving a discounted least squares objective for general nonlinear continuous-time systems.
	Provided that the system is detectable (incrementally integral input/output-to-state stable, i-iIOSS), we show that there exists a sufficiently long estimation horizon that guarantees robust global exponential stability of the estimation error in a {time-discounted} $L^2$-to-$L^\infty$ sense.
	In addition, we show that i-iIOSS Lyapunov functions can be efficiently constructed by verifying certain linear matrix inequality conditions.
	In combination, we propose a flexible Lyapunov-based MHE framework in continuous time, which particularly offers more tuning possibilities than its discrete-time analog, and provide sufficient conditions for stability that can be easily verified in practice.
	Our results are illustrated by a numerical example.
}

\section{Introduction} 

The internal state of a dynamic system is often not measured in practice for various (e.g., economic or physical) reasons.
However, control, system monitoring, and fault detection often require knowledge of the state, which in turn demands a reliable method for its reconstruction based on the available input and output signals.
This is generally challenging, especially when highly nonlinear systems are present and robustness to model errors and measurement noise must be ensured.
Moving horizon estimation---an optimization-based approach that inherently accounts for nonlinear system dynamics and constraints---is particularly suitable for this purpose.
Here, the estimate of the unknown state is produced by solving an optimization problem over past input-output data collected over a fixed finite time interval.
It can be interpreted as an approximation to full information estimation (FIE), which optimizes over all available historical data. However, the latter is usually only of theoretical interest (particularly as a benchmark for MHE), since the complexity of the underlying optimization problem continuously grows with time.

An MHE scheme for continuous-time systems was proposed in \cite{Michalska1995}. Since a cost function without a prior weighting was used (which can be seen as a regularization term), the system must satisfy an observability condition to ensure exponential convergence of the estimation error. 
Using such a cost function, however, requires long estimation horizons to ensure satisfactory performance in practice, cf.~\cite[Sec.~4.3.1]{Rawlings2017}.
Since the application of MHE inevitably requires some sort of sampling strategy (i.e., discrete time points at which the optimization is performed), schemes for discrete-time systems have recently been the main focus in the literature.
Early results have used certain nonlinear observability conditions and/or considered special cases such as no or asymptotically vanishing disturbances \cite{Rao2003,Alessandri2008,Rawlings2012}.
In recent years, the notion of incremental input/output-to-state stability (i-IOSS) has proven to be a very useful concept for nonlinear detectability, enabling significant advances in the MHE theory.
In particular, robust stability of MHE with a standard least squares objective was established in \cite{Mueller2017} and generalized in \cite{Allan2019a}, however, with theoretical guarantees that---counter-intuitively---deteriorate with an increasing estimation horizon.
The Lyapunov-based approach proposed in \cite{Allan2021a} is able to avoid this drawback, but on the other hand requires additional assumptions such as stabilizability.
In contrast, another line of research considers a cost function that includes explicit time discounting.
This establishes a more direct link to the i-IOSS property and allows the derivation of strong guarantees, cf., e.g., \cite{Knuefer2018,Knuefer2023,Hu2023}.
The Lyapunov framework proposed in~\cite{Schiller2023c}, which essentially relies on the same underlying principles, additionally provides less conservative conditions on the horizon length sufficient for guaranteed robustly stable estimation, cf. ~\cite[Sec. ~III.D]{Schiller2023c} for a more detailed discussion on this topic.

In this paper, we propose an MHE scheme using a discounted least squares objective for general nonlinear continuous-time systems and establish robust global exponential stability of the estimation error in a {time-discounted} $L^2$-to-$L^\infty$ sense.
Adapting the recent results for discrete-time settings, we use the concept of incremental integral input/output-to-state stability (i-iIOSS) introduced in~\cite{Schiller2023a} to characterize the underlying detectability property.
We additionally show how i-iIOSS Lyapunov functions can be efficiently computed by verifying certain linear matrix inequality (LMI)~conditions.

Especially when the physical system to be estimated actually corresponds to a continuous-time system (which is often the case in practice), the proposed MHE framework offers several key advantages over purely discrete-time schemes.
First, we note that arbitrary sampling strategies can be employed to define time instances at which the underlying optimization problem is actually solved, which can even be modified online at runtime.
{This provides a huge additional degree of freedom, and even allows the proposed MHE scheme to be used in an event-triggered fashion, that is, by choosing the sampling instances online depending on a suitable triggering rule.
	Consequently, the proposed MHE scheme can be better tailored to the problem at hand, which can yield more accurate results with less computational effort compared to standard equidistant sampling.}

Furthermore, it may be advantageous in practice to characterize the detectability of a continuous-time system instead of its discretized representation (that depends on the chosen sampling scheme).
In particular, the proposed LMI conditions for computing an i-iIOSS Lyapunov function are in general less complex and thus easier to verify compared to \cite[Th.~2, Cor.~3]{Schiller2023c} applied to the discretized system dynamics.

The remainder of the paper is organized as follows. Section~\ref{sec:prelim} introduces the notation, system description, and basic definitions.
We present the MHE scheme in detail in Section~\ref{sec:MHE_setup} and derive sufficient conditions for guaranteed robust stability in Section~\ref{sec:MHE_RGAS}.
Section~\ref{sec:IOSS} provides sufficient LMI conditions to verify the underlying detectability assumption (i-iIOSS) {for special classes of nonlinear systems}.
We conclude in Section~\ref{sec:example} by illustrating the effectiveness of the proposed approach with a standard MHE benchmark example.

\section{Preliminaries}\label{sec:prelim}

\subsection{Notation}
Let $\mathbb{R}_{\geq0}$ denote the set of all non-negative real values.
By $|x|$, we indicate the Euclidean norm of the vector $x\in\mathbb{R}^n$. 
The quadratic norm with respect to a positive definite matrix $Q=Q^\top$ is denoted by {$|x|_Q^2=x^\top Q x$}, and the minimal and maximal eigenvalues of $Q$ are denoted by $\lambda_{\min}(Q)$ and $\lambda_{\max}(Q)$, respectively.
The maximum generalized eigenvalue of positive definite matrices $A=A^\top$ and $B=B^\top$ is denoted as $\lambda_{\max}(A,B)$, i.e., the largest scalar $\lambda$ satisfying $\det (A-\lambda B) = 0$.
For a measurable, locally essentially bounded function $z:\mathbb{R}_{\geq0}\rightarrow\mathbb{R}^n$, the essential sup-norm is defined as $\|z\| = \mathrm{ess}\sup_{t\geq0}|z(t)|$, and for the restriction of $z$ to an interval $[a,b]$ with $a,b\geq0$ as $\|z\|_{a:b} = \esssup_{t\in[a,b]}|z(t)|$.

Finally, we recall that a function $\alpha:\mathbb{R}_{\geq 0}\rightarrow\mathbb{R}_{\geq 0}$ is of class $\mathcal{K}$ if it is continuous, strictly increasing, and satisfies $\alpha(0)=0$; if additionally $\alpha(s)=\infty$ for $s\rightarrow\infty$, it is of class $\mathcal{K}_{\infty}$.
By $\mathcal{L}$, we refer to the class of functions $\theta:\mathbb{R}_{\geq 0}\rightarrow \mathbb{R}_{\geq 0}$ that are continuous, non-increasing, and satisfy $\lim_{t\rightarrow\infty}\theta(t)=0$, and by $\mathcal{KL}$ to the class of functions $\beta:\mathbb{R}_{\geq 0}\times\mathbb{R}_{\geq 0}\rightarrow\mathbb{R}_{\geq 0}$ with $\beta(\cdot,t)\in\mathcal{K}$ and $\beta(s,\cdot)\in\mathcal{L}$ for any fixed $t\in\mathbb{R}_{\geq 0}$ and $s\in\mathbb{R}_{\geq 0}$, respectively.

\subsection{Problem Setup}\label{sec:problem}
We consider the following system
\begin{subequations}\label{eq:sys}
	\begin{align}
	\dot{x}(t) &= f(x(t),u(t),w(t)),\label{eq:sys_1}\\
	y(t) &= h(x(t),u(t),w(t))\label{eq:sys_2}
	\end{align}
\end{subequations}
with states $x(t)\in\mathcal{X}\subseteq\mathbb{R}^n$, outputs $y(t)\in\mathcal{Y}\subseteq\mathbb{R}^p$, and time $t\geq0$.
The control and disturbance input signals $u$ and $w$ are measurable, locally essentially bounded functions\footnote{
	{See \cite[Appendix C.1]{Sontag1990} for definitions and further details on standard technical terms related to Lebesque measure theory.}
} taking values in $\mathcal{U}\subseteq\mathbb{R}^m$ and $\mathcal{W}\subseteq\mathbb{R}^q$, and we denote the set of such functions as $\mathcal{M}_{\mathcal{U}}$ and $\mathcal{M}_{\mathcal{W}}$, respectively.
The mappings $f:\mathcal{X}\times\mathcal{U}\times{\mathcal{W}}\rightarrow \mathcal{X}$ and $h:\mathcal{X}\times\mathcal{U}\times{\mathcal{W}}\rightarrow \mathcal{Y}$ constitute the system dynamics and output equation. In the following, we assume that $f$ and $h$ are jointly continuous and that $\mathcal{X}$ and $\mathcal{W}$ are closed.

Given an initial state $\chi\in\mathcal{X}$, we denote the solution of \eqref{eq:sys_1} at a time $t\geq0$ driven by the control $u\in\mathcal{M}_{\mathcal{U}}$ and the disturbance $w\in\mathcal{M}_{\mathcal{W}}$ by $x(t,\chi,u,w)$, and the corresponding output signal by $y(t,\chi,u,w):=h(x(t,\chi,u,w),u(t),w(t))$.
In the following, we assume that solutions of~\eqref{eq:sys} are unique and defined globally on $\mathbb{R}_{\geq0}$ for all $\chi\in\mathcal{X}$, $u\in\mathcal{M}_{\mathcal{U}}$, and $w\in\mathcal{M}_{\mathcal{W}}$.

To ensure that the unknown state trajectory of the system \eqref{eq:sys} can be (at least asymptotically) reconstructed, the system has to exhibit a suitable detectability property.
To this end, we employ the notion of (time-discounted) incremental integral input/output-to-state stability (i-iIOSS) as introduced in~\cite{Schiller2023a}---more precisely, its equivalent (\cite[Th.~1]{Schiller2023a}) Lyapunov function characterization.

\begin{definition}[i-iIOSS Lyapunov function, \protect{\cite[Def.~2]{Schiller2023a}}]
	\label{def:IOSS_Lyap}
	A function $U:\mathcal{X}\times\mathcal{X}\rightarrow\mathbb{R}_{\geq 0}$ is an \mbox{i-iIOSS} Lyapunov function for the system~\eqref{eq:sys} if it is continuous and there exist functions $\alpha_1,\alpha_2,\sigma_w,\sigma_{y}\in\mathcal{K}_{\infty}$ and a constant $\lambda\in[0,1)$ such that
	\begin{subequations}
		\label{eq:IOSS_Lyap}
		\begin{align}
		\label{eq:IOSS_Lyap_1}
		&\alpha_1(|\chi_1-\chi_2|)\leq U(\chi_1,\chi_2) \leq \alpha_2(|\chi_1-\chi_2|),\\[1ex]	
		&\ U(x_1(t),x_2(t)) \nonumber\\[-1ex]
		\leq &\ U(\chi_1,\chi_2)\lambda^t +  \int_{0}^{t} \lambda^{t-\tau}  \big(\sigma_{w}(|w_1(\tau)-w_2(\tau)|)\nonumber \\
		& \hspace{23ex} +\sigma_{y}(|y_1(\tau)-y_2(\tau)|)\big) d\tau \label{eq:IOSS_Lyap_2}
		\end{align}
	\end{subequations}
	for all $t\in\mathbb{R}_{\geq0}$, $\chi_1,\chi_2\in\mathcal{X}$, $u\in\mathcal{M}_{\mathcal{U}}$, and $w_1,w_2\in\mathcal{M}_{\mathcal{W}}$, where $x_1(\tau) = x(\tau,\chi_1,u,{w}_2)$, ${x}_2(\tau) = x(\tau,{\chi}_2,u,{w}_2)$ and 	$y_1(\tau) = y(\tau,\chi_1,u,{w}_2)$, ${y}_2(\tau) = y(\tau,{\chi}_2,u,{w}_2)$, $\tau\in[0,t]$.
\end{definition}

The concept of i-iIOSS was introduced in~\cite{Schiller2023a} as a continuous-time analogue of i-IOSS (in discrete time), which became the standard detectability assumption in the context of MHE/FIE in recent years, cf., e.g., \cite{Allan2021a,Schiller2023c,Knuefer2023,Hu2023,Rawlings2017}.
{This notion implies that the difference of any two system trajectories is upper bounded by the difference of their respective initial conditions, their inputs, and their outputs. Hence, if the differences of their inputs and outputs become (and stay) small, then eventually the difference of their states must also become small, which can be used in the sense of a detectability property.}
{Note that this is not a restrictive condition in the context of robustly stable state observers (which are the focus of this work), since i-iIOSS is in fact necessary for the existence of such, cf.~\cite[Prop.~2]{Schiller2023a}}.
{We also point out that the exponential decrease in \eqref{eq:IOSS_Lyap_2} is indeed without loss of generality, cf.~\cite[Rem.~1]{Schiller2023a}; this will be exploited in Section~\ref{sec:MHE} to construct MHE/FIE schemes for which robust stability of the estimation error can be guaranteed, cf. Remark~\ref{rem:disco}.}

Since we are interested in an optimization-based method for {(nonlinear)} state estimation, we inevitably have to employ some sort of sampling strategy.
To this end, let $\mathcal{T}\subset\mathbb{R}_{\geq0}$ be a set containing (arbitrary) distinct time instances.
Given some prior estimate $\hat{\chi}$ of the unknown initial condition $\chi$, the overall goal is to compute, at each sampling time $t_i\in\mathcal{T}$, the estimate $\hat{x}(t_i)$ of the true system state~$x(t_i)$.
In the next section, we provide conditions under which the resulting estimation error converges exponentially to a neighborhood around the origin by means of the following definition.

\begin{definition}[RGAS, RGES]
	\label{def:RGAS}
	A state estimator for system~\eqref{eq:sys} is robustly globally asymptotically stable (RGAS) if there exist functions $\beta,\beta_{x},\beta_{w}\in\mathcal{K}_{\infty}$ and a constant $\eta\in[0,1)$ such that the estimated state $\hat{x}$ with $\hat{x}(0)=\hat{\chi}$ satisfies
	\begin{align}	
	\label{eq:RGAS}	
	&\ \beta(|x(t_i,\chi,u,w)-\hat{x}(t_i)|) \nonumber\\
	\leq&\ \beta_{x}(|\chi-\hat{\chi}|)\eta^{t_i} + \int_{0}^{t_i}\eta^{t_i-\tau}\beta_{w}(|w(\tau)|)d\tau
	\end{align}
	for all $t_i\in\mathcal{T}$, all initial conditions $\chi,\hat{\chi}\in\mathcal{X}$, all controls $u\in\mathcal{M}_{\mathcal{U}}$, and all disturbances $w\in\mathcal{M}_{\mathcal{W}}$.
	If additionally $\beta(s)\geq C_1s^r$ and $\beta_x(s)\leq C_2s^r$ for some $C_1,C_2,r>0$, then the state estimator is robustly globally exponentially stable (RGES).
\end{definition}

\begin{remark}[Point-wise error bound]\label{rem:CT_RGAS}
	Note that we consider~\eqref{eq:RGAS} only point-wise for all $t_i\in\mathcal{T}$ (and not for all $t\geq0$), since the estimates are produced at certain time instances and not continuously, compare also~\cite{Michalska1995}.
	Extensions of the proposed MHE scheme to account for a pure continuous-time stability notion are possible by predicting the estimated state between two consecutive samples, e.g., using the nominal system dynamics or an additional auxiliary observer, cf., e.g.,~\cite{Schiller2022b}.
\end{remark}

The property~\eqref{eq:RGAS} provides a time-discounted $L^2$-to-$L^\infty$ bound for the estimation error.
It directly implies an \mbox{$L^\infty$-to-$L^\infty$} bound as shown in the following proposition.

\begin{proposition}\label{prop:RGES}
	Suppose there exists a state estimator for system~\eqref{eq:sys}. If it is RGAS in the sense of Definition~\ref{def:RGAS}, then there exist functions $\psi\in\mathcal{KL}$ and $\gamma\in\mathcal{K}_\infty$ such that
	\begin{equation}\label{eq:RGAS_sup}	
	|x(t_i,\chi,u,w)-\hat{x}(t_i)| \leq \max\{\psi(|\chi-\hat{\chi}|,{t_i}),\gamma(\|w\|_{0:t_i})\}
	\end{equation}
	for all $t_i\in\mathcal{T}$, $\chi,\hat{\chi}\in\mathcal{X}$, $u\in\mathcal{M}_{\mathcal{U}}$, and $w\in\mathcal{M}_{\mathcal{W}}$.
	If it is RGES, then there exist $C>0$ and $\rho\in[0,1)$ such that~\eqref{eq:RGAS_sup} holds with $\psi(s,t)$ replaced by $Cs\rho^t$ and a suitable function $\gamma\in\mathcal{K}_\infty$.
\end{proposition}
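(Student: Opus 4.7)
The plan is to convert the additive $L^2$-style bound in \eqref{eq:RGAS} into a multiplicative max-style bound by first bounding the disturbance integral uniformly in time, and then applying the standard ``$a+b\le 2\max\{a,b\}$'' trick together with the inverse of $\beta$.

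First, I would handle the disturbance integral. Since $\eta\in[0,1)$, a change of variables gives
\begin{equation*}
\int_{0}^{t_i}\eta^{t_i-\tau}\beta_{w}(|w(\tau)|)d\tau \leq \beta_{w}(\|w\|_{0:t_i})\int_{0}^{t_i}\eta^{s}ds \leq \frac{\beta_{w}(\|w\|_{0:t_i})}{-\ln\eta},
\end{equation*}
(or, if $\eta=0$, the integral is trivially bounded). Combining this with \eqref{eq:RGAS} yields
\begin{equation*}
\beta(|x(t_i,\chi,u,w)-\hat{x}(t_i)|)\leq \beta_{x}(|\chi-\hat{\chi}|)\eta^{t_i}+\frac{\beta_{w}(\|w\|_{0:t_i})}{-\ln\eta}.
\end{equation*}

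Next, I would apply the elementary inequality $a+b\leq 2\max\{a,b\}$ to the right-hand side, then use that $\beta\in\mathcal{K}_\infty$ has a well-defined inverse $\beta^{-1}\in\mathcal{K}_\infty$ which is monotonically increasing, so that it commutes with the max. This gives
\begin{equation*}
|x(t_i,\chi,u,w)-\hat{x}(t_i)|\leq\max\bigl\{\psi(|\chi-\hat{\chi}|,t_i),\gamma(\|w\|_{0:t_i})\bigr\}
\end{equation*}
with the explicit choices $\psi(s,t):=\beta^{-1}(2\beta_{x}(s)\eta^{t})$ and $\gamma(s):=\beta^{-1}(2\beta_{w}(s)/(-\ln\eta))$. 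A routine check confirms the regularity: $\gamma$ is a composition of $\mathcal{K}_\infty$ functions and thus lies in $\mathcal{K}_\infty$; for $\psi$, fixing $t$ gives a $\mathcal{K}_\infty$ function in $s$, and fixing $s>0$ gives a continuous, strictly decreasing function in $t$ vanishing as $t\to\infty$ (since $\eta^t\to0$ and $\beta^{-1}(0)=0$), so $\psi\in\mathcal{KL}$.

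For the RGES refinement, I would use the polynomial bounds $\beta(s)\geq C_1 s^r$ and $\beta_x(s)\leq C_2 s^r$, which imply $\beta^{-1}(s)\leq (s/C_1)^{1/r}$. Substituting into the definition of $\psi$ gives $\psi(s,t)\leq (2C_2/C_1)^{1/r}s\,\eta^{t/r}$, so the claim holds with $C=(2C_2/C_1)^{1/r}$ and $\rho=\eta^{1/r}\in[0,1)$, while $\gamma$ retains its $\mathcal{K}_\infty$ form (possibly redefined by a positive constant factor). I do not expect any substantial obstacle; the only mildly delicate points are the handling of the boundary case $\eta=0$ in the integral bound and the verification that the constructed $\psi$ indeed belongs to $\mathcal{KL}$, both of which are entirely routine.
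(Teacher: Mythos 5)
Your proposal is correct and follows essentially the same route as the paper's proof: bounding the discounted integral by $-\beta_w(\|w\|_{0:t_i})/\ln\eta$, applying $a+b\le\max\{2a,2b\}$, inverting $\beta$, and then exploiting the polynomial bounds to obtain $C=(2C_2/C_1)^{1/r}$ and $\rho=\eta^{1/r}$ in the RGES case. The only cosmetic difference is that you explicitly flag the degenerate case $\eta=0$, which the paper leaves implicit.
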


The proof is shifted to the Appendix, Section~\ref{sec:proof_prop_RGES}.

Overall, Definition~\ref{def:RGAS} is a powerful robust stability property of observers for several reasons. First, the integral term in~\eqref{eq:RGAS} can be viewed as the energy of the true disturbance signal $w$ under fading memory and thus has a reasonable physical interpretation, compare also~\cite{Praly1996}.
Second, it ensures a finite estimation error bound for both (unbounded) disturbance signals with finite energy (by~\eqref{eq:RGAS}) and persistent (non-vanishing) bounded disturbances (by application of Proposition~\ref{prop:RGES}), cf.~\cite{Schiller2023a} for more details.
Third, it is particularly useful for observers that do not admit a convenient state-space representation (such as MHE and FIE), since~\eqref{eq:RGAS} directly implies that $|x(t)-\hat{x}(t)|\rightarrow0$ if $|w(t)| \rightarrow 0$ for $t\rightarrow\infty$, compare also \cite{Allan2021} for a similar discussion in a discrete-time setting.

\section{Moving horizon estimation}\label{sec:MHE}

\subsection{Setup}\label{sec:MHE_setup}
In this section, we present the MHE scheme and provide conditions for guaranteed robust stability.
To this end, we restrict ourselves to exponential detectability rather than asymptotic detectability, i.e., the functions $\alpha_1,\alpha_2,\sigma_w,\sigma_{y}$ in \eqref{eq:IOSS_Lyap} are assumed to be of quadratic form.

\begin{assumption}[Exponential detectability]
	\label{ass:IOSS_Lyap}
	System~\eqref{eq:sys} admits a quadratically bounded i-iIOSS Lyapunov function~$U$ according to Definition~\ref{def:IOSS_Lyap} satisfying
	\begin{subequations}
		\label{eq:IOSS_Lyap_ass}
		\begin{align}
		\label{eq:IOSS_Lyap_1_ass}
		&|\chi_1-\chi_2|_{P_1}^2\leq U(\chi_1,\chi_2) \leq |\chi_1-\chi_2|_{P_2}^2,\\[1ex]	
		&\ U(x_1(t),x_2(t)) \nonumber\\[-1ex]
		\leq &\ U(\chi_1,\chi_2)\lambda^t +  \int_{0}^{t} \lambda^{t-\tau}  \big(|w_1(\tau)-{w}_2(\tau)|_Q^2\nonumber \\[-1ex]
		& \hspace{25ex} +|y_1(\tau)-{y}_2(\tau)|_R^2\big) d\tau, \label{eq:IOSS_Lyap_2_ass}
		\end{align}
	\end{subequations}
	with $P_1,P_2,Q,R\succ 0$ for all $t\geq0$, $\chi_1,{\chi}_2\in\mathcal{X}$, $u\in\mathcal{M}_{\mathcal{U}}$, and $w_1,{w}_2\in\mathcal{M}_{\mathcal{W}}$.
\end{assumption}

Note that the requirement of exponential detectability is a standard condition for robust stability of MHE, ensuring a linear contraction of the estimation error over a (finite) horizon, cf., e.g., \cite{Allan2021a,Schiller2023c,Knuefer2023,Hu2023,Rawlings2017}.
{Moreover, using the same reasoning as in the proofs of \cite[Th.~1, Prop.~2]{Schiller2023a}, one can show that Assumption~\ref{ass:IOSS_Lyap} is a necessary condition for the existence of RGES observers.}
In Section~\ref{sec:IOSS}, we provide sufficient conditions for constructing a quadratic i-iIOSS Lyapunov function $U$ satisfying \eqref{eq:IOSS_Lyap_ass}.

Let the initial estimate $\hat{\chi}\in\mathcal{X}$ be given. At any sampling time $t_i\in\mathcal{T}$, the proposed MHE scheme considers the past input and output trajectories of the system~\eqref{eq:sys} within the moving time interval $[t_i-T_{t_i},t_i]$ of length $T_{t_i} = \min\{t_i, T\}$ for some $T>0$.
Let $u_{t_i} : [0,T_{t_i}) \rightarrow \mathcal{M}_{\mathcal{U}}$ and $y_{t_i} : [0,T_{t_i}) \rightarrow \mathcal{M}_{\mathcal{Y}}$ denote the currently involved segments of the input and output trajectories of system~\eqref{eq:sys}, which are defined as
\begin{align}
u_{t_i}(\tau) &:= u(t_i-T_{t_i}+\tau),\ \tau\in[0,T_{t_i}),\\
y_{t_i}(\tau) &:= y(t_i-T_{t_i}+\tau,\chi,u,w),\ \tau\in[0,T_{t_i}).
\end{align}
Then, the optimal state trajectory on the interval $[t_i-T_{t_i},t_i]$ is obtained by solving the following optimization problem:
\begin{subequations}\label{eq:MHE}
	\begin{align}\label{eq:MHE_0}
	&\min_{\bar{\chi}_{t_i},\bar{w}_{t_i}}
	J(\bar{\chi}_{t_i},\bar{w}_{t_i},\bar{y}_{t_i},{t_i}) \\ 
	\text{s. t. }
	&\bar{x}_{t_i}(\tau)=x(\tau,\bar{\chi}_{t_i},u_{{t_i}},\bar{w}_{t_i}),\ \tau\in[0,T_{t_i}], \label{eq:MHE_1} \\	
	&\bar{x}_{t_i}(\tau)\in\mathcal{X},\ \tau\in[0,T_{t_i}] \\		
	&\bar{y}_{t_i}(\tau)=y(\tau,\bar{\chi}_{t_i},u_{{t_i}},\bar{w}_{t_i}),\ \tau\in[0,T_{t_i}), \label{eq:MHE_2} \\	
	&\bar{w}_{t_i}(\tau)\in{\mathcal{W}}, \ \bar{y}_{t_i}(\tau)\in{\mathcal{Y}},\	\tau\in[0,T_{t_i}).\
	\label{eq:MHE_3}
	\end{align}
\end{subequations}
The decision variables $\bar{\chi}_{t_i}$ and $\bar{w}_{t_i} : [0,T_{t_i})\rightarrow\mathcal{W}$ denote the estimates of the state at the beginning of the horizon and the disturbance signal over the horizon, respectively, estimated at time $t_i$.
Given the input trajectory segment $u_{t_i}$, these decision variables (uniquely) determine the estimated state and output trajectories $\bar{x}_{t_i}$ and $\bar{y}_{t_i}$ via~\eqref{eq:MHE_1} and~\eqref{eq:MHE_2} as functions defined on $[0,T_{t_i}]$ and $[0,T_{t_i})$, respectively.
In~\eqref{eq:MHE_0}, we consider the discounted objective
\begin{align}
&\ J(\bar{\chi}_{t_i},\bar{w}_{t_i},\bar{y}_{t_i},{t_i})\nonumber \\
=&\ \Gamma(\bar{\chi}_{t_i},\hat{x}({t_i}-T_{t_i}))\lambda^{T_{t_i}} \nonumber \\
&\ + \int_{0}^{{T_{t_i}}}\lambda^{T_{t_i}-j} L(\bar{w}_{t_i}(\tau),y_{t_i}(\tau)-\bar{y}_{t_i}(\tau))d\tau \label{eq:MHE_objective}
\end{align}
with quadratic prior weighting $\Gamma(\chi,x) = 2|\chi-x|_{P_2}^2$ and quadratic stage cost $L(w,\Delta y) = 2|w|^2_Q + |\Delta y|_R^2$.
Here, $\hat{x}(t_i-T_{t_i})$ is the prior estimate and the parameters $P_2,Q,R,\lambda$ are from Assumption~\ref{ass:IOSS_Lyap}.

\begin{remark}[Discounting]\label{rem:disco}
	The use of exponential discounting in~\eqref{eq:MHE_objective} establishes a direct link between the detectability property (Assumption~\ref{ass:IOSS_Lyap}), the MHE scheme (via the cost function~\eqref{eq:MHE_objective}), and desired stability property (RGES, see Theorem~\ref{thm:MHE} below).
	It is motivated by recent time-discounted MHE approaches for discrete-time systems, which, compared to their non-discounted counterparts, allow the direct derivation of stronger and less conservative robustness guarantees by leveraging this particular structural connection, cf.~\cite{Knuefer2018,Knuefer2023,Hu2023,Schiller2023c}.
\end{remark}

\begin{remark}[Tuning]\label{rem:cost}
	We point out that fixing the weighting matrices $P_2,Q,R$ in the MHE objective~\eqref{eq:MHE_objective} to the values from the i-iIOSS Lyapunov function is in fact without loss of generality and therefore does not restrict any tuning possibilities.
	In particular, the scaled Lyapunov function $\tilde{U}(x_1,x_2) = KU(x_1,x_2)$ with
	\begin{equation}\label{eq:def_K}
	K := \left(\max\{\lambda_{\max}(P_2,\tilde{P}_2),\lambda_{\max}(Q,\tilde{Q}),\lambda_{\max}(R,\tilde{R})\}\right)^{-1}
	\end{equation}
	satisfies\footnote{
		Generally, note here that $|x|_A^2\leq\lambda_{\max}(A,B)|x|_B^2$ for any real vector $x$ and any $A,B\succ0$ of appropriate dimensions.
	} Assumption~\ref{ass:IOSS_Lyap} with $P_2,Q,R$ replaced by arbitrary positive definite matrices $\tilde{P}_2,\tilde{Q},\tilde{R}$ and $P_1$ replaced by $\tilde{P}_1=KP_1$, compare also~\cite[Rem.~1]{Schiller2023c} for a similar discussion in a discrete-time setting.
	Note that this also allows for choosing the weights in~\eqref{eq:MHE_objective} time-varying (e.g., based on Kalman-filtering-like update recursions) if uniform bounds are either known \textit{a priori} or imposed online.
	
	Moreover, we point out that $\lambda$ in~\eqref{eq:MHE_objective} can be replaced by any $\tilde{\lambda}\in[\lambda,1)$, since the dissipation inequality~\eqref{eq:IOSS_Lyap_2_ass} for a given i-iIOSS Lyapunov function $U$ is still a valid dissipation inequality for $U$ if $\lambda$ is replaced by $\tilde{\lambda}\in[\lambda,1)$.
	Throughout this paper, we choose the parameters $\lambda,P_2,Q,R$ in~\eqref{eq:MHE_objective} according to those from the i-iIOSS Lyapunov function from Assumption~\ref{ass:IOSS_Lyap} to simplify the notation.
\end{remark}

Note that the optimization problem~\eqref{eq:MHE} and \eqref{eq:MHE_objective} admits a global solution under Assumption~\ref{ass:IOSS_Lyap}, since $\Gamma$ and $L$ are positive definite (due to positive definiteness of $P_2,Q,R$) and radially unbounded in the decision variables and the sets $\mathcal{X}$ and $\mathcal{W}$ are closed.
We denote a minimizer to~\eqref{eq:MHE} and \eqref{eq:MHE_objective} by $(\bar{\chi}^*_{t_i},\bar{w}_{t_i}^*)$, and the corresponding optimal state trajectory by $\bar{x}^*_{t_i}(\tau) = x(\tau,\bar{\chi}^*_{t_i},u_{t_i},\bar{w}_{t_i}^*)$, $\tau\in[0,T_{t_i}]$.
The resulting state estimate at sampling instance $t_i\in\mathcal{T}$ is then given by
\begin{equation}
\hat{x}(t_i) := \bar{x}_{t_i}^*(T_{t_i}). \label{eq:x_hat}
\end{equation}
Furthermore, we define the estimated state trajectory $\hat{x}(t)$, $t\in[0,t_i]$ as the piece-wise continuous function resulting from the concatenation of optimal trajectory segments according to
\begin{equation} \label{eq:x_hat_CT} 
\hat{x}(t) := 
\begin{cases}
\bar{x}_{k(t)}^*(t-k(t)+T_{k(t)}), &  t\in(0,t_i]\\
\hat{\chi} & t=0
\end{cases}
\end{equation}
for all $t_i\in\mathcal{T}$, where $k(t)$ is the sampling time associated with $t$ defined by
\begin{equation}\label{eq:def_k}
k(t):=\min_{k\in\{k\in\mathcal{T}:k\geq t\}}k.
\end{equation}

\begin{remark}[Estimated trajectory]\label{rem:x_CT}
	Computing the piecewise continuous state trajectory~\eqref{eq:x_hat_CT} is essential to ensure that the prior estimate $\hat{x}(t_i-T_{t_i})$ in the MHE objective in~\eqref{eq:MHE_objective} is well-defined for all $t_i\in\mathcal{T}$.
	To ensure that it is available at time $t_i\in\mathcal{T}$ (i.e., has already been computed in the past), the horizon length $T$ must naturally satisfy
	\begin{equation}\label{eq:def_delta}
	T>\bar{\delta}:=\sup\limits_{t\geq 0} k(t)-t,
	\end{equation}
	where $\bar{\delta}$ can be referred to as the maximum deviation between a time $t$ and its associated sampling time $k(t)\geq t$ that may occur for all $t\geq0$.
	For equidistant sampling using a constant sampling period $\delta>0$, i.e., when $\mathcal{T}=\{t\in\mathbb{R}_{\geq0}: t = n\delta, n\in\mathbb{N}_{\geq0}\}$, it trivially holds that $\bar{\delta}=\delta$.
	Note that computing the full trajectory~\eqref{eq:x_hat_CT} (and hence the third step in the algorithm below) could be avoided by designing $\mathcal{T}$ such that $t_i-T_{t_i}\in\mathcal{T}$ for all $t_i\in\mathcal{T}$, compare also Remark~\ref{rem:T}.
\end{remark}

The MHE problem \eqref{eq:MHE} is solved in a receding horizon fashion, and the corresponding algorithm can be summarized as follows.
Given the sampling time $t_i\in\mathcal{T}$ and its predecessor $t_i^-\in\mathcal{T}$ (if there is none, set $t_i^-=0$),
\begin{enumerate}
	\item collect the input and output trajectory segments $u_{t_i}$ and $y_{t_i}$,
	\item solve the MHE problem~\eqref{eq:MHE} with objective \eqref{eq:MHE_objective},
	\item update the estimated trajectory~\eqref{eq:x_hat_CT} by attaching the most recent optimal trajectory segment $\bar{x}_{t_i}^*(\tau)$, $\tau\in(t_i^-,t_i]$,
	\item update $t_i^- = t_i$, pick the next sampling time $t_{i} = \min_{k\in\{k\in\mathcal{T}:k> t_i\}}k$, and go back to 1.
\end{enumerate}

The stability properties of the proposed continuous-time MHE are established in the next section. In particular, we provide simple conditions for designing a suitable set $\mathcal{T}$ and a horizon length $T$ such that RGES of MHE (cf.~Definition~\ref{def:RGAS}) is guaranteed.

\subsection{Robust stability analysis}\label{sec:MHE_RGAS}

In the following, we show how the i-iIOSS Lyapunov function $U$ from Assumption~\ref{ass:IOSS_Lyap} can be used to characterize a decrease of the estimation error $x(t)-\hat{x}(t)$ in Lyapunov coordinates over the interval $[t_i-T_{t_i},t]$ for any $t\geq0$ and its corresponding sampling time $t_i=k(t)$ (where we define $x(t):=x(t,\chi,u,w)$, $t\geq0$ for notational brevity).
Indeed, this requires invoking the specific choice of the cost function~\eqref{eq:MHE_objective} involving the optimal trajectories over the interval $[t_i-T_{t_i},t_i]$ estimated at time $t_i$.
Then, we apply this bound recursively to establish RGES of MHE in the sense of Definition~\ref{def:RGAS}, see Theorem~\ref{thm:MHE} below.

\begin{proposition}\label{prop:MHE}
	Let Assumption~\ref{ass:IOSS_Lyap} hold. Then, the state estimate $\hat{x}(t)$ from~\eqref{eq:x_hat_CT} satisfies
	\begin{align}
	&\ U(x(t),\hat{x}(t)) \nonumber \\
	\leq &\ \lambda^{-(t_i-t)}\Big(4\lambda_{\max}(P_2,P_1)\lambda^{T_{t_i}}U(x(t_i-T_{t_i}),\hat{x}(t_i-T_{t_i}))\nonumber \\
	&\hspace{11ex} +  4 \int_{t_i-T_{t_i}}^{t_i}\lambda^{t_i-\tau}|w(\tau)|_Q^2 d\tau\Big), \label{eq:thm_1}
	\end{align}
	for all $t\geq0$, $\chi,\hat{\chi}\in\mathcal{X}$, $u\in\mathcal{M}_{\mathcal{U}}$, $w\in\mathcal{M}_{\mathcal{W}}$, where $t_i=k(t)$ with $k(t)$ from~\eqref{eq:def_k} and $\lambda = e^{-\kappa}$.
\end{proposition}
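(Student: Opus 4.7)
The plan is to apply the i-iIOSS Lyapunov dissipation inequality~\eqref{eq:IOSS_Lyap_2_ass} on the window $[t_i - T_{t_i}, t_i]$ to the pair consisting of the true trajectory $x(\cdot)$ driven by $(u,w)$ and the shifted optimal MHE trajectory $\bar{x}^*_{t_i}(\cdot - (t_i - T_{t_i}))$ driven by $(u_{t_i},\bar{w}^*_{t_i})$---shared input, different initial conditions and disturbances. Because~\eqref{eq:IOSS_Lyap_2_ass} holds for arbitrary initial times by time-invariance of~\eqref{eq:sys}, this yields, for $t \in [t_i - T_{t_i}, t_i]$, an upper bound on $U(x(t),\hat{x}(t))$ of $\lambda^{t - (t_i - T_{t_i})} U(x(t_i - T_{t_i}), \bar{\chi}^*_{t_i})$ plus a discounted integral of $|w(\tau) - \bar{w}^*_{t_i}(\tau - (t_i - T_{t_i}))|^2_Q + |y(\tau) - \bar{y}^*_{t_i}(\tau - (t_i - T_{t_i}))|^2_R$ over the window, using the trajectory identity $\hat{x}(t) = \bar{x}^*_{t_i}(t - t_i + T_{t_i})$ from~\eqref{eq:x_hat_CT}.

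Next, I would apply $|a+b|^2_Q \leq 2|a|^2_Q + 2|b|^2_Q$ to the disturbance difference so that the integrand is bounded by $2|w(\tau)|^2_Q + L(\bar{w}^*_{t_i}(\cdot), y(\tau) - \bar{y}^*_{t_i}(\cdot))$. Extending the $L$-integral from $[t_i - T_{t_i},t]$ to the full window $[t_i - T_{t_i},t_i]$ (legitimate, since the integrand is non-negative) and changing variables identifies it with $\lambda^{t - t_i}\bigl(J^* - \lambda^{T_{t_i}}\Gamma(\bar{\chi}^*_{t_i}, \hat{x}(t_i - T_{t_i}))\bigr)$, where $J^*$ is the optimal MHE cost~\eqref{eq:MHE_objective}. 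In parallel, the initial term is controlled via~\eqref{eq:IOSS_Lyap_1_ass} by $|x(t_i - T_{t_i}) - \bar{\chi}^*_{t_i}|^2_{P_2}$, which by the quadratic triangle inequality with pivot $\hat{x}(t_i - T_{t_i})$ splits into $2\lambda_{\max}(P_2,P_1) U(x(t_i - T_{t_i}), \hat{x}(t_i - T_{t_i}))$ (using $|\cdot|^2_{P_2} \leq \lambda_{\max}(P_2,P_1)|\cdot|^2_{P_1}$ together with~\eqref{eq:IOSS_Lyap_1_ass}) plus exactly $\Gamma(\bar{\chi}^*_{t_i}, \hat{x}(t_i - T_{t_i}))$. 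The discount factors align so this $\Gamma$-contribution cancels the $-\lambda^{T_{t_i}}\Gamma$ term isolated above.

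What remains is to bound $J^*$ itself. I would invoke optimality by evaluating the MHE cost at the feasible candidate given by the true initial state $x(t_i - T_{t_i})$ and the true disturbance restricted to the window, which makes the output residual vanish identically. This yields $J^* \leq 2\lambda^{T_{t_i}}|x(t_i - T_{t_i}) - \hat{x}(t_i - T_{t_i})|^2_{P_2} + 2\int_{t_i - T_{t_i}}^{t_i} \lambda^{t_i - \tau}|w(\tau)|^2_Q \, d\tau$, whose first summand is once more controlled by $2\lambda^{T_{t_i}}\lambda_{\max}(P_2, P_1) U(x(t_i - T_{t_i}), \hat{x}(t_i - T_{t_i}))$. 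Summing the two Lyapunov contributions and the two disturbance integrals, using $t \leq t_i$ and non-negativity to absorb the $[t_i - T_{t_i},t]$-integral into the $[t_i - T_{t_i},t_i]$-integral, and factoring out $\lambda^{-(t_i - t)}$ via $\lambda^{t - \tau} = \lambda^{-(t_i - t)}\lambda^{t_i - \tau}$, produces exactly~\eqref{eq:thm_1} with constant $4$.

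The main obstacle is the careful bookkeeping of time shifts and exponential weights across the three ingredients---the translated Lyapunov inequality, the change of variables in the MHE cost, and the feasibility-based bound on $J^*$---which must be aligned so that the $\Gamma$-term cancels; otherwise the final bound would still depend on the unknown $\bar{\chi}^*_{t_i}$. Everything else (the quadratic triangle inequalities, the monotonicity of the disturbance integral in the upper limit, and the trajectory identity from~\eqref{eq:x_hat_CT}) is routine.
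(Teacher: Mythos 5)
Your proposal is correct and follows essentially the same route as the paper's proof: apply the i-iIOSS dissipation inequality \eqref{eq:IOSS_Lyap_2_ass} to the true trajectory versus the optimal MHE trajectory on the estimation window, split the disturbance difference and the prior term via quadratic triangle inequalities so that the optimal cost $J^*$ appears, bound $J^*$ by feasibility of the true state/disturbance (vanishing output residual), and convert the remaining $P_2$-terms to $U$ via $\lambda_{\max}(P_2,P_1)$ and \eqref{eq:IOSS_Lyap_1_ass}. The only cosmetic difference is that you isolate the $\Gamma$-term from $J^*$ and cancel it explicitly, whereas the paper assembles $J^*$ from the pieces; the time-shift and discount bookkeeping you describe matches \eqref{eq:proof_th1_1}--\eqref{eq:proof_ind_end}.
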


The proof is shifted to the Appendix, Section~\ref{sec:proof_prop_MHE}; the main idea is similar to that of \cite[Prop~1]{Schiller2023c}, with technical differences due to the continuous-time setup.

In the following, we consider the case when the horizon $T$ and set of sampling times $\mathcal{T}$ are designed such that
\begin{equation}\label{eq:cond_T}
4\lambda_{\max}(P_2,P_1)\lambda^{T-\bar{\delta}} =: \rho^{T-\bar{\delta}} \in(0,1) 
\end{equation}
holds for some $\rho\in(0,1)$, where $\bar{\delta}$ is defined in~\eqref{eq:def_delta}.
Provided that an i-iIOSS Lyapunov function as in Assumption~\ref{ass:IOSS_Lyap} is known, we point out that for any design of~$\mathcal{T}$, condition~\eqref{eq:cond_T} can be easily satisfied by choosing $T$ such that
\begin{equation}\label{eq:def_T}
T> -\frac{\ln(4\lambda_{\max}(P_2,P_1))}{\ln(\lambda)}+\bar{\delta},
\end{equation}
leading to
\begin{equation}
\rho=(4\lambda_{\max}(P_2,P_1))^{\frac{1}{T-\bar{\delta}}}\lambda.
\end{equation}
Whenever \eqref{eq:cond_T} is satisfied, Proposition~\ref{prop:MHE} directly provides a dissipation inequality in integral form with exponential decrease for the estimation error in Lyapunov coordinates; consequently, the i-iIOSS Lyapunov function $U$ can be viewed as a Lyapunov-like function for MHE on each interval $[k(t)-T_{k(t)},t]$ for all $t\geq0$.

In the following, we establish RGES of MHE by applying the bound~\eqref{eq:thm_1} recursively to cover the whole interval $[0,t_i]$ for a given sampling time $t_i\in\mathcal{T}$.
However, special care must be taken when concatenating the dissipation inequalities due to the overlap of their domains.

\begin{theorem}[RGES of MHE]\label{thm:MHE}
	Let Assumption~\ref{ass:IOSS_Lyap} hold. Suppose that the horizon length $T$ is chosen such that~\eqref{eq:cond_T} is satisfied for some $\rho\in(0,1)$.
	Then, the estimation error satisfies
	\begin{align}
	|x(t_i)-\hat{x}(t_i)|_{P_1}^2
	\leq 4\rho^{t_i}|\chi-\hat{\chi}|^2_{P_2} + 8\int_{0}^{t_i}\rho^{t_i-\tau}|w(\tau)|_{Q}^2 d \tau \label{eq:thm_2}
	\end{align}
	for all $t_i\in\mathcal{T}$ and all $\chi,\hat{\chi}\in\mathcal{X}$, $u\in\mathcal{M}_{\mathcal{U}}$, and $w\in\mathcal{M}_{\mathcal{W}}$, i.e., the proposed MHE scheme is RGES in the sense of Definition~\ref{def:RGAS}.
\end{theorem}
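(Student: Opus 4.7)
The plan is to iterate the per-horizon dissipation bound of Proposition~\ref{prop:MHE} recursively backward in time, telescoping the resulting inequalities until the initial time $0$ is reached, and then convert the Lyapunov bound at $t_i$ into the desired norm-based error estimate via the quadratic sandwich of Assumption~\ref{ass:IOSS_Lyap}. Since the prior $\hat{x}(t_i - T_{t_i})$ in each MHE subproblem is in general evaluated at a non-sampling time, each recursive step must invoke Proposition~\ref{prop:MHE} at a $t \notin \mathcal{T}$, which produces the $\lambda^{-(t_i-t)}$ ``lift'' factor and causes the consecutive dissipation windows to overlap---this is the main technical subtlety in the continuous-time setting.

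Concretely, I would set $r_0 := t_i$ and, for $j \geq 0$, define recursively $\bar{k}_j := k(r_j)$, $a_j := \bar{k}_j - r_j \in [0,\bar{\delta}]$ (with $a_0 = 0$), and $r_{j+1} := \bar{k}_j - T_{\bar{k}_j}$. The assumption $T > \bar{\delta}$ from~\eqref{eq:def_delta} guarantees that $\{r_j\}$ strictly decreases and reaches $r_{N+1} = 0$ in finitely many steps, with $T_{\bar{k}_j} = T$ for $j<N$ and $T_{\bar{k}_N} = \bar{k}_N \leq T$ at termination. Abbreviating $V(\cdot) := U(x(\cdot),\hat{x}(\cdot))$ and $C := 4\lambda_{\max}(P_2,P_1)$, Proposition~\ref{prop:MHE} applied at each $t = r_j$ yields $V(r_j) \leq \beta_j V(r_{j+1}) + \epsilon_j$ with $\beta_j = C\lambda^{T_{\bar{k}_j}-a_j}$ and $\epsilon_j = 4\lambda^{-a_j}\int_{r_{j+1}}^{\bar{k}_j}\lambda^{\bar{k}_j-\tau}|w(\tau)|_Q^2\,d\tau$. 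Unrolling gives
\begin{align*}
V(t_i) \leq \Big(\prod_{j=0}^N \beta_j\Big) V(0) + \sum_{j=0}^N \Big(\prod_{k=0}^{j-1}\beta_k\Big)\epsilon_j.
\end{align*}
The telescoping identity $\sum_{j=0}^N(T_{\bar{k}_j} - a_j) = t_i - r_{N+1} = t_i$ collapses the leading product to $C^{N+1}\lambda^{t_i}$; rewriting $C = (\rho/\lambda)^{T-\bar{\delta}}$ via~\eqref{eq:cond_T} and using $t_i \geq N(T-\bar{\delta})$ (which follows from $\sum_j a_j \leq N\bar{\delta}$) upper bounds this factor by $4\rho^{t_i}$. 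For the sum, a key pointwise identity---obtained by combining the explicit form $\bar{k}_j = t_i - jT + \sum_{k=1}^j a_k$ with the same telescoping---gives $(\prod_{k<j}\beta_k)\lambda^{-a_j}\lambda^{\bar{k}_j-\tau} = C^j\lambda^{t_i-\tau}$. Since $\tau \in [r_{j+1},\bar{k}_j]$ implies $\tau \leq \bar{k}_j \leq t_i - j(T-\bar{\delta})$, the relation $C^j\lambda^{t_i-\tau} \leq \rho^{t_i-\tau}$ holds pointwise. The consecutive integration windows $[r_{j+1},\bar{k}_j]$ overlap only on $[r_{j+1},\bar{k}_{j+1}]$ of length $a_{j+1}\leq\bar{\delta}$, so each $\tau \in [0,t_i]$ lies in at most two windows, which yields $\sum_j(\prod_{k<j}\beta_k)\epsilon_j \leq 8\int_0^{t_i}\rho^{t_i-\tau}|w(\tau)|_Q^2\,d\tau$. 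Combining with $|x(t_i)-\hat{x}(t_i)|_{P_1}^2 \leq V(t_i)$ and $V(0) \leq |\chi-\hat{\chi}|_{P_2}^2$ from Assumption~\ref{ass:IOSS_Lyap} then gives~\eqref{eq:thm_2}.

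The hard part will be the integral bookkeeping in the presence of overlapping windows: each $\lambda^{-a_j}$ factor introduced by applying Proposition~\ref{prop:MHE} at the non-sampling time $r_j$ has to be absorbed into the telescoping product so that the pointwise bound $C^j\lambda^{t_i-\tau} \leq \rho^{t_i-\tau}$ holds uniformly in $\tau$, regardless of where $r_j$ sits inside $[\bar{k}_j - \bar{\delta}, \bar{k}_j]$; only after this pointwise bound has been established can the double-coverage between adjacent windows safely be counted. The factor $8$ in the integral bound (rather than $4$) and the slack in the prior coefficient $4\rho^{t_i}$ are precisely the price of these two effects, mirroring the discrete-time analogue in~\cite{Schiller2023c}.
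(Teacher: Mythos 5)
Your route is structurally the same as the paper's: you apply Proposition~\ref{prop:MHE} recursively at the window left-endpoints $r_{j+1}=\bar k_j-T_{\bar k_j}$ (which are generally not sampling times), telescope the resulting affine recursions, use \eqref{eq:cond_T} to absorb the factors $C=4\lambda_{\max}(P_2,P_1)$ into $\rho$-decay, and pay a factor of two for the overlap of consecutive integration windows. The integral bookkeeping is sound: the identity $(\prod_{k<j}\beta_k)\lambda^{-a_j}\lambda^{\bar k_j-\tau}=C^j\lambda^{t_i-\tau}$ and the pointwise bound $C^j\lambda^{t_i-\tau}\le\rho^{t_i-\tau}$ for $\tau\le\bar k_j$ check out, and the ``at most two windows'' claim is true---though you should justify why non-consecutive windows are disjoint: this does \emph{not} require $T>2\bar\delta$, but follows from minimality of $k(\cdot)$, since $\bar k_{j+2}$ is a sampling time strictly smaller than $\bar k_{j+1}=k(r_{j+1})$ and hence satisfies $\bar k_{j+2}<r_{j+1}$.

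The genuine gap is in the leading term. Your product $\prod_{j=0}^N\beta_j=C^{N+1}\lambda^{t_i}$ contains $N+1$ factors of $C$, but the telescoping only guarantees $t_i\ge N(T-\bar\delta)$: the last window has length $\bar k_N$ with $r_N=\bar k_{N-1}-T$ possibly arbitrarily close to $0$, so it contributes no usable decay. The best bound obtainable from \eqref{eq:cond_T} and $\lambda\le\rho$ is therefore $C^{N+1}\lambda^{t_i}\le C\rho^{t_i}=4\lambda_{\max}(P_2,P_1)\rho^{t_i}$, and combining with $V(0)\le|\chi-\hat\chi|^2_{P_2}$ yields the leading term $4\lambda_{\max}(P_2,P_1)\rho^{t_i}|\chi-\hat\chi|^2_{P_2}$, which exceeds the claimed $4\rho^{t_i}|\chi-\hat\chi|^2_{P_2}$ whenever $P_1\ne P_2$ (take $t_i=N(T-\bar\delta)+\epsilon$ with $\epsilon\to0$ to see your step ``upper bounds this factor by $4\rho^{t_i}$'' fails). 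This still proves RGES, but not the inequality \eqref{eq:thm_2} as stated. The fix---which is exactly what the paper does and slots directly into your scheme---is to treat the final step $j=N$ differently: since the window is $[0,\bar k_N]$ and the prior is $\hat x(0)=\hat\chi$, use the intermediate inequality \eqref{eq:proof_ind_end} from the proof of Proposition~\ref{prop:MHE}, which gives $V(r_N)\le\lambda^{-a_N}\big(4\lambda^{\bar k_N}|\chi-\hat\chi|^2_{P_2}+4\int_0^{\bar k_N}\lambda^{\bar k_N-\tau}|w(\tau)|^2_Q\,d\tau\big)$; the factor $\lambda_{\max}(P_2,P_1)$ in the statement of Proposition~\ref{prop:MHE} arises only from converting the $P_2$-weighted prior mismatch back into $U$, which is unnecessary at time $0$. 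With $\beta_N$ thus replaced by $4\lambda^{r_N}$, the remaining $N$ factors of $C$ are matched by the $N$ gaps of length at least $T-\bar\delta$, and the constant $4$ in \eqref{eq:thm_2} is recovered.
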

\begin{proof}
	We start by noting that condition~\eqref{eq:cond_T} implies $\lambda\leq\rho$ and
	\begin{equation}
	4\lambda_{\max}(P_2,P_1)\lambda^{T} = \rho^{T-\bar{\delta}}\lambda^{\bar{\delta}}. \label{eq:proof_ind_00}
	\end{equation}
	In the following, we will make use of the relation
	\begin{align}
	&\ 4\lambda_{\max}(P_2,P_1)\lambda^{T-(k(t)-t)}
	\stackrel{\eqref{eq:proof_ind_00}}{=}\rho^{T-\bar{\delta}}\lambda^{\bar{\delta}-(k(t)-t)}\nonumber\\
	\leq&\ \rho^{T-\bar{\delta}}\rho^{\bar{\delta}-(k(t)-t)} =\rho^{T-(k(t)-t)}, \label{eq:proof_cond_T}
	\end{align}
	which holds for any $t\geq0$.
	Assume that an arbitrary sampling time $t_i\in\mathcal{T}$ is given.
	We define the sequence of sampling times $\{k_{j}\}, j\in\mathbb{N}_0$ starting at $k_0=t_i$ using the following recursion:
	\begin{equation}
	k_{j+1}	= \min_{k\in\mathcal{K}_j} k \label{eq:proof_sampling}
	\end{equation}
	with
	\begin{equation*}
	\mathcal{K}_j := \{k\in\mathcal{T}:k_j > k\geq k_j-T\}
	\end{equation*}
	if the set $\mathcal{K}_j$ is non-empty, and $k_{j+1}=k_j$ otherwise.
	In the following, we use $c:=4\lambda_{\max}(P_2,P_1)$ for notational brevity.
	
	Now assume that for some time $s\geq0$, $k_j = k(s)\geq T$ is the corresponding sampling instance.
	From Proposition~\ref{prop:MHE}, it follows that
	\begin{align}
	&\ U(x(s),\hat{x}(s)) \nonumber\\
	\leq&\ \lambda^{-(k_{j}-s)}\Big(c\lambda^{T}U(x(k_{j}-T),\hat{x}(k_{j}-T))\nonumber \\
	&\ +4\int_{s}^{k_{j}}\lambda^{k_{j}-\tau}|w(\tau)|_{Q}^2 d \tau  + 4\int_{k_{j+1}}^{s}\lambda^{k_{j}-\tau}|w(\tau)|_{Q}^2 d \tau\nonumber\\
	&\
	+ 4\int_{k_{j}-T}^{k_{j+1}}\lambda^{k_{j}-\tau}|w(\tau)|_{Q}^2 d \tau \Big).	\label{eq:proof_ind_0}
	\end{align}
	We aim to apply this property recursively.
	To this end, we have split the integral on the right-hand side involving the interval $[k_{j}-T,k_{j}]$ in three parts (noting that $k_{j}-T \leq k_{j+1} < s \leq k_{j}$); the first part overlaps with the previous iteration covering the interval $[k_{j-1}-T,k_{j-1}]$ (unless $k_{j-1}-T\in\mathcal{T} \Rightarrow k_j=k_{j-1}-T$), and the third part overlaps with the succeeding iteration covering the interval $[k_{j+1}-T,k_{j+1}]$ (unless $k_{j}-T\in\mathcal{T} \Rightarrow k_{j+1} = k_{j}-T$).
	
	We claim that
	\begin{align}
	&\ U(x(k_0),\hat{x}(k_0))\nonumber\\
	\leq&\ c\lambda^{T}\rho^{k_0-k_{j}}U(x(k_{j}-T),\hat{x}(k_{j}-T)) \nonumber\\
	&\ +8\int_{k_{j+1}}^{k_0}\rho^{k_0-\tau}|w(\tau)|_{Q}^2 d \tau
	+ 4\int_{k_{j}-T}^{k_{j+1}}\rho^{k_0-\tau}|w(\tau)|_{Q}^2 d \tau\label{eq:proof_ind}
	\end{align}
	holds for any $k_0\in\mathcal{T}$ and all $j\in\mathbb{N}_0$ for which $k_{j+1}\geq T$ is satisfied, and we give a proof by induction.
	For the base case, consider~\eqref{eq:proof_ind_0} with $s=k_0$. We obtain
	\begin{align*}
	&\ U(x(k_0),\hat{x}(k_0))\\
	\leq&\ c\lambda^{T}U(x(k_0-T),\hat{x}(k_0-T)) \\
	&\
	+ 4\int_{k_{1}}^{k_0}\lambda^{k_0-\tau}|w(\tau)|_{Q}^2 d \tau
	+ 4\int_{k_0-T}^{k_{1}}\lambda^{k_0-\tau}|w(\tau)|_{Q}^2 d \tau,	
	\end{align*}
	for which~\eqref{eq:proof_ind} with $j=0$ serves as an upper bound.	
	
	We now prove~\eqref{eq:proof_ind} for general integers $j\in\mathbb{N}_0$.
	To this end, consider~\eqref{eq:proof_ind_0} with $s=k_j-T$. If $j\in\mathbb{N}_0$ is such that $k_{j+1}\geq T$, we can use the fact that
	\begin{align}
	&\ U(x(k_j-T),\hat{x}(k_j-T))\nonumber\\
	\leq&\ 	
	\lambda^{-(k_{j+1}-(k_{j}-T))}\Big(c\lambda^{T}U(x(k_{j+1}-T),\hat{x}(k_{j+1}-T)) \nonumber\\
	&\ +4\int_{k_j-T}^{k_{j+1}}\lambda^{k_{j+1}-\tau}|w(\tau)|_{Q}^2 d \tau\nonumber\\
	&\ + 4\int_{k_{j+2}}^{k_j-T}\lambda^{k_{j+1}-\tau}|w(\tau)|_{Q}^2 d \tau\nonumber\\
	&\ + 4\int_{k_{j+1}-T}^{k_{j+2}}\lambda^{k_{j+1}-\tau}|w(\tau)|_{Q}^2 d \tau\Big).\label{eq:proof_ind_1}
	\end{align}
	Now assume that~\eqref{eq:proof_ind} is true for some integer $j\in \mathbb{N}_0$ for which $k_{j+1} \geq T$. In the following, we show that~\eqref{eq:proof_ind} then also holds for $j+1$.
	The combination of~\eqref{eq:proof_ind} and~\eqref{eq:proof_ind_1} yields
	\begin{align}
	&\ U(x(k_0),\hat{x}(k_0))\nonumber\\
	\leq&\ c\lambda^{T-(k_{j+1}-(k_{j}-T))}\rho^{k_0-k_{j}}\nonumber\\
	&\ \cdot \Big(c\lambda^{T}U(x(k_{j+1}-T),\hat{x}(k_{j+1}-T))\nonumber \\
	&\qquad +4\int_{k_{j}-T}^{k_{j+1}}\lambda^{k_{j+1}-\tau}|w(\tau)|_{Q}^2 d \tau\nonumber\\
	&\qquad + 4\int_{k_{j+2}}^{k_{j}-T}\lambda^{k_{j+1}-\tau}|w(\tau)|_{Q}^2 d \tau\nonumber\\
	&\qquad + 4\int_{k_{j+1}-T}^{k_{j+2}}\lambda^{k_{j+1}-\tau}|w(\tau)|_{Q}^2 d \tau	\Big)\nonumber\\
	&\qquad +8\int_{k_{j+1}}^{k_0}\rho^{k_0-\tau}|w(\tau)|_{Q}^2 d \tau
	+ 4\int_{k_{j}-T}^{k_{j+1}}\rho^{k_0-\tau}|w(\tau)|_{Q}^2 d \tau \label{eq:proof_ind_2}
	\end{align}
	From \eqref{eq:proof_cond_T} and \eqref{eq:proof_sampling}, we can infer that
	\begin{align}
	c\lambda^{T-(k_{j+1}-(k_{j}-T))}\rho^{k_0-k_{j}} &\leq \rho^{T-(k_{j+1}-(k_{j}-T))}\rho^{k_0-k_{j}} \nonumber \\
	&= \rho^{k_0-k_{j+1}}. \label{eq:proof_ind_3}
	\end{align}
	Applying \eqref{eq:proof_ind_3} to \eqref{eq:proof_ind_2} and using that $\lambda\leq\rho$ leads to
	\begin{align*}	
	&\ U(x(k_0),\hat{x}(k_0))\\
	\leq&\ \rho^{k_0-k_{j+1}} c\lambda^{T}U(x(k_{j+1}-T),\hat{x}(k_{j+1}-T))  \\
	&\ +4\int_{k_{j}-T}^{k_{j+1}}\rho^{k_0-\tau}|w(\tau)|_{Q}^2 d \tau
	+ 4\int_{k_{j+2}}^{k_{j}-T}\rho^{k_0-\tau}|w(\tau)|_{Q}^2 d \tau \\
	&\ + 4\int_{k_{j+1}-T}^{k_{j+2}}\rho^{k_0-\tau}|w(\tau)|_{Q}^2 d \tau	\\
	&\ +8\int_{k_{j+1}}^{k_0}\rho^{k_0-\tau}|w(\tau)|_{Q}^2 d \tau
	+ 4\int_{k_{j}-T}^{k_{j+1}}\rho^{k_0-\tau}|w(\tau)|_{Q}^2 d \tau\\
	\leq&\ \rho^{k_0-k_{j+1}} c\lambda^{T}U(x(k_{j+1}-T),\hat{x}(k_{j+1}-T)) \\
	&\ + 8\int_{k_{j+2}}^{k_0}\rho^{k_0-\tau}|w(\tau)|_{Q}^2 d \tau
	+ 4\int_{k_{j+1}-T}^{k_{j+2}}\rho^{k_0-\tau}|w(\tau)|_{Q}^2 d \tau.
	\end{align*}		
	This proves~\eqref{eq:proof_ind} for all $k_0\in\mathcal{T}$ and all $j\in\mathbb{N}_0$ for which $k_{j+1}\geq T$ is satisfied.
	In fact, the above argument shows that~\eqref{eq:proof_ind} also holds for the smallest $j\in \mathbb{N}_0$ for which $k_{j+1} < T$.
	Furthermore, when $k_{j+1}<T$, from \eqref{eq:proof_ind_end} it follows that
	\begin{align*}
	&\ U(x(k_{j}-T),\hat{x}(k_{j}-T))\\
	\leq &\
	\lambda^{-(k_{j+1}-(k_{j}-T))}\Big(4\lambda^{k_{j+1}}|x(0)-\hat{x}(0)|^2_{P_2} \nonumber\\
	&\ + 4\int_{k_{j}-T}^{k_{j+1}}\lambda^{k_{j+1}-\tau}|w(\tau)|_{Q}^2 d \tau\\
	&\ + 4\int_{0}^{k_{j}-T}\lambda^{k_{j+1}-\tau}|w(\tau)|_{Q}^2 d \tau \Big).
	\end{align*}
	The combination with~\eqref{eq:proof_ind} leads to
	\begin{align}
	&\hspace{3ex} U(x(k_0),\hat{x}(k_0))\nonumber\\
	&\leq c\lambda^{T-(k_{j+1}-(k_{j}-T))}\rho^{k_0-k_{j}}		\Big(4\lambda^{k_{j+1}}|x(0)-\hat{x}(0)|^2_{P_2} \nonumber\\
	&\ +4\hspace{-0.5ex}\int_{k_{j}-T}^{k_{j+1}}\lambda^{k_{j+1}-\tau}|w(\tau)|_{Q}^2 d \tau + 4\hspace{-0.5ex}\int_{0}^{k_{j}-T}\lambda^{k_{j+1}-\tau}|w(\tau)|_{Q}^2 d \tau \Big) \nonumber\\
	&\  +8\int_{k_{j+1}}^{k_{0}}\rho^{k_0-\tau}|w(\tau)|_{Q}^2 d \tau
	+ 4\int_{k_{j}-T}^{k_{j+1}}\rho^{k_0-\tau}|w(\tau)|_{Q}^2 d \tau\label{eq:proof_ind_4}
	\end{align}	
	By using~\eqref{eq:proof_ind_3} and the fact that $\lambda\leq\rho$, from \eqref{eq:proof_ind_4} it follows that
	\begin{align}
	&\ U(x(k_0),\hat{x}(k_0))\nonumber\\
	\leq&\ 4\rho^{k_0}|x(0)-\hat{x}(0)|^2_{P_2} \nonumber\\
	&\ +4\int_{k_{j}-T}^{k_{j+1}}\rho^{k_0-\tau}|w(\tau)|_{Q}^2 d \tau
	+ 4\int_{0}^{k_{j}-T}\rho^{k_0-\tau}|w(\tau)|_{Q}^2 d \tau \nonumber\\
	&\ +8\int_{k_{j+1}}^{k_{0}}\rho^{k_0-\tau}|w(\tau)|_{Q}^2 d \tau
	+ 4\int_{k_{j}-T}^{k_{j+1}}\rho^{k_0-\tau}|w(\tau)|_{Q}^2 d \tau\nonumber\\
	\leq&\ 4\rho^{k_0}|x(0)-\hat{x}(0)|^2_{P_2} 
	+ 8\int_{0}^{k_{0}}\rho^{k_0-\tau}|w(\tau)|_{Q}^2 d \tau. \label{eq:proof_ind_sum}
	\end{align}
	Applying the lower bound in~\eqref{eq:IOSS_Lyap_1_ass} and recalling that $t_i=k_0$ yields \eqref{eq:thm_2}, which finishes this proof.
\end{proof}

\begin{remark}[Sampling strategy]\label{rem:T}	
	If $\mathcal{T}$ is such that $t_i-T_{t_i}\in\mathcal{T}$ for all $t_i\in\mathcal{T}$ (which is the case, e.g., for equidistant sampling with a constant period $\delta>0$ and $T$ being an integer multiple of $\delta$, cf. Remark~\ref{rem:x_CT}), it follows that the interval boundaries of each time horizon considered in the MHE optimization problem~\eqref{eq:MHE} coincide exactly with two sampling times.
	Consequently, it suffices to evaluate~\eqref{eq:thm_1} only at sampling times $t=t_i\in\mathcal{T}$---yielding a more direct Lyapunov-like function for MHE---and simply apply this bound recursively to establish RGES as is the case in discrete-time settings (e.g., \cite[Cor.~1]{Schiller2023c}), in particular without the need to take into account any discrepancy between horizon boundaries and sampling times.
	These simplifications result in a slightly weaker condition on the horizon length and a slightly improved RGES result; namely, we can take $\bar{\delta}=0$ in \eqref{eq:cond_T} and replace the factor~$8$ in~\eqref{eq:thm_2} by the factor $4$.
\end{remark}

\begin{remark}[Full information estimation]
	The restriction to exponential detectability (Assumption~\ref{ass:IOSS_Lyap}) can be relaxed to asymptotic detectability (Definition~\ref{def:IOSS_Lyap}) if FIE is applied {(that is, the MHE scheme~\eqref{eq:MHE} with $T_{t_i}=t_i$)}.
	In this case, we consider the cost function~\eqref{eq:MHE_objective} with $T_{t_i}$ replaced by~$t_i$, $\Gamma(\chi,x)=\alpha_2(2|\chi-x|)$ and $L(w,\Delta y) = \sigma_w(2|w|) + \sigma_y(|\Delta y|)$.
	By applying the same steps as in the proof of Proposition~\ref{prop:MHE}, we can infer that the corresponding estimation error satisfies	
	\begin{align*}
	\alpha_1(|x(t_i)-\hat{x}(t_i)|) \leq&\ 2\lambda^{t_i}\alpha_2(2|\chi-\hat{\chi}|)\\
	&\ + 2\int_{0}^{t_i}\lambda^{t_i-\tau}\sigma_w(2|w(\tau)|) d \tau
	\end{align*}
	for all $t_i\in\mathcal{T}$ and all $\chi,\hat{\chi}\in\mathcal{X}$, $u\in\mathcal{M}_{\mathcal{U}}$, and $w\in\mathcal{M}_{\mathcal{W}}$, which implies RGAS of FIE in the sense of Definition~\ref{def:RGAS}.
\end{remark}

\section{Constructing smooth i-iIOSS Lyapunov functions}\label{sec:IOSS}

The construction of i-iIOSS Lyapunov functions satisfying Assumption~\ref{ass:IOSS_Lyap} can be performed similarly to the discrete-time case in \cite[Sec.~IV]{Schiller2023c}---namely, by applying arguments from contraction theory and Riemannian geometry. 
Such a differential approach offers significant advantages, since the problem can be reformulated using LMI conditions that can be efficiently verified using standard tools such as sum-of-squares (SOS) optimization or linear parameter-varying (LPV) embeddings.
Therefore, it is not surprising that this method has numerous applications in the analysis and design of controller and observers for nonlinear systems, cf., e.g., \cite{Forni2014,Manchester2018,Yi2021,Manchester2017,Singh2017,Lopez2021}.

In the remainder of this section, for ease of presentation, we focus on the construction of quadratic i-iIOSS Lyapunov functions only and impose the following regularity properties on the system and its input signals.
\begin{assumption}\label{ass:IOSS}
	The function $f$ is continuously differentiable in all of its arguments, $h$ is affine in $(x,w)$, and the sets $\mathcal{X}$ and $\mathcal{W}$ are convex.
	Furthermore, the input signals $u$ and $w$ are piecewise\footnote{
		If the following results are to be applied to the methods presented in Section~\ref{sec:MHE}, it must be ensured that the estimated disturbance $\bar{w}_{t_i}$ in the MHE problem in~\eqref{eq:MHE} is also piecewise right-continuous. In practice, this is immediately the case when standard discretization methods are used to solve~\eqref{eq:MHE}.
	} right-continuous.
\end{assumption}

{Affinity of $h$ is a technical requirement in the proof of Theorem~\ref{thm:IOSS} (compare Section~\ref{sec:proof_thm_IOSS} in the Appendix) to easily transfer a differential property to an incremental one (it basically ensures that the output $h$ evaluated along the shortest path between any two points $x_1,x_2\in\mathcal{X}$ is a linear combination of the respective outputs $y_1,y_2$ for any point of that path).
	It is directly satisfied if a subset (or linear combination) of the system state is measured, which is the case for many practical applications. Furthermore, assuming affinity of $h$ is also quite general in the sense that it covers several observable canonical forms and hence models that admit a suitable transformation, compare also \cite[Sec.~IV]{Schiller2023c}, \cite[Rem.~1]{Yi2021}.}

We define the linearizations of~\eqref{eq:sys} at a given point $(x,u,w) \in \mathcal{X}\times\mathcal{U}\times\mathcal{W}$ as
\begin{equation}\label{eq:IOSS_ABCD}
\begin{matrix}
A=\displaystyle\frac{\partial f}{\partial x}(x,u,w),\
B=\displaystyle\frac{\partial f}{\partial w}(x,u,w),\\[2ex]
C=\displaystyle\frac{\partial h}{\partial x}(x,u,w),\
D=\displaystyle\frac{\partial h}{\partial w}(x,u,w).
\end{matrix}
\end{equation}

\begin{theorem}\label{thm:IOSS}
	Let Assumption~\ref{ass:IOSS} be satisfied. If there exist matrices $P,Q,R\succ0$ and a constant $\kappa>0$ such that
	\begin{align}\label{eq:IOSS_LMI}
	\begin{bmatrix}
	PA+A^\top P+\kappa P - C^\top RC & PB-C^\top RD\\
	B^\top P-D^\top RC & -D^\top RD-Q
	\end{bmatrix}\preceq 0
	\end{align}
	holds for all $(x,u,w)\in\mathcal{X}\times\mathcal{U}\times\mathcal{W}$, then		$U(x_1,x_2)=|x_1-x_2|^2_P$ is a smooth i-iIOSS Lyapunov function on $\mathcal{X}\times\mathcal{U}\times\mathcal{W}$ and satisfies Assumption~\ref{ass:IOSS_Lyap} with $P_1=P_2=P$ and $\lambda=e^{-\kappa}$.
\end{theorem}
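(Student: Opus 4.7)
The plan is a differential/contraction-theoretic argument centered on the candidate function $V(t):=|x_1(t)-x_2(t)|_P^2$, where $x_1(\tau)=x(\tau,\chi_1,u,w_1)$ and $x_2(\tau)=x(\tau,\chi_2,u,w_2)$ are two trajectories of~\eqref{eq:sys} driven by a common control $u$. Writing $\Delta x(t)=x_1(t)-x_2(t)$ and $\Delta w(t)=w_1(t)-w_2(t)$, the goal reduces to establishing the pointwise dissipation inequality
$$\dot V(t)+\kappa V(t)\leq |y_1(t)-y_2(t)|_R^2+|\Delta w(t)|_Q^2$$
almost everywhere, since Grönwall's inequality then yields~\eqref{eq:IOSS_Lyap_2_ass} with $\lambda=e^{-\kappa}$, and the quadratic sandwich~\eqref{eq:IOSS_Lyap_1_ass} with $P_1=P_2=P$ holds by definition of $U$.

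To derive the dissipation inequality, I would use the fundamental theorem of calculus along the straight-line path $s\mapsto(x_2+s\Delta x,u,w_2+s\Delta w)$, $s\in[0,1]$, which by convexity of $\mathcal{X}$ and $\mathcal{W}$ stays inside $\mathcal{X}\times\mathcal{U}\times\mathcal{W}$. Denoting the Jacobians in~\eqref{eq:IOSS_ABCD} evaluated at this interpolating triple by $A(s),B(s),C(s),D(s)$, this gives $f(x_1,u,w_1)-f(x_2,u,w_2)=\int_0^1[A(s)\Delta x+B(s)\Delta w]\,ds$, so that
$$\dot V+\kappa V=\int_0^1\bigl[\Delta x^\top(PA(s)+A(s)^\top P+\kappa P)\Delta x+2\Delta x^\top PB(s)\Delta w\bigr]\,ds.$$
Applying the LMI~\eqref{eq:IOSS_LMI} pointwise to the interpolating triple and rearranging bounds the integrand by $|C(s)\Delta x+D(s)\Delta w|_R^2+|\Delta w|_Q^2$, yielding
$$\dot V+\kappa V\leq\int_0^1|C(s)\Delta x+D(s)\Delta w|_R^2\,ds+|\Delta w|_Q^2.$$

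The main obstacle---and the step where affinity of $h$ is essential---is closing the output term on the right-hand side. Because $h$ is affine in $(x,w)$, the Jacobians $C$ and $D$ are independent of the triple, so the integrand is constant in $s$ and moreover $y_1(t)-y_2(t)=C\Delta x+D\Delta w$ exactly; the integral thus collapses to $|y_1-y_2|_R^2$, giving the desired dissipation inequality. For a general $h$ one would only have $y_1-y_2=\int_0^1[C(s)\Delta x+D(s)\Delta w]\,ds$, and Jensen's inequality would then provide a bound in the wrong direction---affinity of $h$ is precisely what resolves this mismatch (an extension beyond the affine case would seem to require a coordinate change to an observable form or a non-constant Riemannian metric). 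The remaining technical details (absolute continuity of $t\mapsto V(t)$, a.e. differentiability of the trajectories, and the standard comparison argument for the linear scalar ODE) follow from $f$ being $C^1$ and $u,w$ being piecewise right-continuous as imposed in Assumption~\ref{ass:IOSS}.
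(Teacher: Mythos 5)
Your proposal is correct, and it rests on the same core mechanism as the paper's proof: parameterize the straight line $s\mapsto(x_2+s\Delta x,\,u,\,w_2+s\Delta w)$ (which stays in $\mathcal{X}\times\mathcal{U}\times\mathcal{W}$ by convexity), apply the LMI~\eqref{eq:IOSS_LMI} pointwise along it, and use affinity of $h$ to collapse the output term to $|y_1-y_2|_R^2$ exactly --- your diagnosis of why affinity is indispensable (Jensen goes the wrong way for the squared output integral) matches the paper's. The organization differs in a genuine way, though. The paper works with the \emph{differential} dynamics: it lets the whole path $c(t,s)$ evolve under~\eqref{eq:sys}, derives a dissipation inequality for the tangent vectors $\delta_x=dc/ds$, integrates over $s$ to obtain one for the Riemannian energy $E(c(t,\cdot))=\int_0^1|\delta_x|_P^2\,ds$, and then needs the geodesic-minimality argument $E(\gamma(t,\cdot))\le E(c(t,\cdot))$ together with an $\epsilon\to0$ limit to transfer the decrease back to the straight line at each instant. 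You instead never introduce differential dynamics: you differentiate $V(t)=|\Delta x(t)|_P^2$ directly in time and expand $f(x_1,u,w_1)-f(x_2,u,w_2)$ by the fundamental theorem of calculus in $s$, which sidesteps the energy-comparison step and the limiting argument entirely. Your route is the more elementary of the two and is fully rigorous here; what it gives up is generality --- it exploits that the metric $P$ is constant (so that $\dot V=2\Delta x^\top P\,\dot{\Delta x}$ and the straight line is trivially the minimizing path), whereas the paper's Riemannian formulation is the one that would survive a state-dependent metric $P(x)$. Since the theorem only claims a constant quadratic $U$, both arguments establish exactly the stated result.
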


{The proof is shifted to the Appendix, Section~\ref{sec:proof_thm_IOSS}. It follows similar lines as the proofs of, e.g., \cite[Th.~1]{Manchester2018} and \cite[Th.~2]{Schiller2023c}, compare also \cite{Manchester2017,Singh2017}.}

{For a fixed value of $\kappa>0$, condition \eqref{eq:IOSS_LMI} represents an infinite set of LMIs (note that \eqref{eq:IOSS_LMI} is linear in the remaining decision variables $P,Q,R$).
	These may be solved using a finite set of LMIs and standard convex analysis tools based on semidefinite programming (SDP), e.g., by applying SOS relaxations~\cite{Parrilo2003}, by embedding the nonlinear behavior in an LPV model~\cite{Sadeghzadeh2023}, or by suitably gridding the state space and verifying~\eqref{eq:IOSS_LMI} on the grid points.}

\section{Numerical example}\label{sec:example}
To illustrate our results, we consider the following system
\begin{align}
\dot{x} &= 
\begin{bmatrix}
-2k_1x_1^2 + 2k_2x_2 + w_1\\
k_1x_1^2 - k_2x_2 + w_2\\
\end{bmatrix}\\
y &= x_1 + x_2 + w_3
\end{align}
with $k_1=0.16$ and $k_2=0.0064$, which corresponds to the chemical reaction $2A\leftrightharpoons B$ from~\cite[Sec.~5]{Tenny2002}, taking place in a constant-volume batch reactor.
The additional disturbance signal $w$ is piece-wise constant over intervals of length $t_{\Delta} = 0.01$ and satisfies $w(t)\in\mathcal{W}=\{w\in\mathbb{R}^3 : |w_{i}|\leq0.1$, $i=1,2,3\}$ for all $t\geq0$, see the top left plot in Figure \ref{fig}.
We choose the initial condition $\chi = [3,1]^\top$ and the poor initial guess $\hat{\chi} = [0.1,4.5]^\top$, which constitutes a common MHE benchmark example, see, e.g.,~\cite[Sec.~5]{Tenny2002}, \cite[Example  4.38]{Rawlings2017}, \cite[Sec.~V]{Schiller2023c}.
Here, we consider the simulation length $t_{\mathrm{sim}}=5$ and additionally assume that the true trajectory $x$ satisfies {$x(t) \in\mathcal{X} = \{x\in\mathbb{R}^2 : 0.1 \leq x_i \leq 5, i=1,2\}$} for all $0\leq t\leq t_{\mathrm{sim}}$, which is reasonable under this setup.
We verify the LMI conditions~\eqref{eq:IOSS_LMI} on $\mathcal{X}\times\mathcal{W}$ in MATLAB using\footnote{
	{Here, we exploit that condition~\eqref{eq:IOSS_LMI} is linear in $x$; for a fixed value of $\lambda=0.4$ and $\kappa=-\ln\lambda$, solving~\eqref{eq:IOSS_LMI} for the vertices of $\mathcal{X}$ implies that~\eqref{eq:IOSS_LMI} holds for all $\mathcal{X}\times\mathcal{W}$ by convexity of $\mathcal{X}$.}
} YALMIP \cite{Loefberg2004} and the SDP solver MOSEK \cite{MOSEKApS2019}.
Consequently, the quadratic i-iIOSS Lyapunov function $U(x_1,x_2)=|x_1-x_2|^2_P$ with
\begin{equation*}
P =
\begin{bmatrix}
4.009 & 3.768 \\ 3.768 & 3.549
\end{bmatrix}
\end{equation*}
satisfies Assumption~\ref{ass:IOSS_Lyap} on $\mathcal{X}\times\mathcal{W}$ with $\kappa=-\ln\lambda$, $\lambda=0.4$, $Q = \mathrm{diag}(10^3, 10^3, 10^2)$, and $R = 10^2$.

We use the MHE objective~\eqref{eq:MHE_objective} with $\Gamma(\chi,x) = 2|\chi-x|_{P}^2$ and $L(w,\Delta y) = 2|w|^2_Q + |\Delta y|_R^2$ and want to perform 50 MHE updates during the simulation (in the interval $[0,t_{\mathrm{sim}}]$).
To illustrate the flexibility of the proposed MHE scheme allowing for non-equidistant sampling, we design $\mathcal{T}$ such that it contains more samples towards the beginning of the experiment, as can be seen by the blue dots in the top right plot in Figure~\ref{fig}.
This yields $\bar{\delta}=0.19$ in \eqref{eq:def_delta}. Choosing the horizon length $T=2$ satisfies~\eqref{eq:cond_T} and guarantees the convergence rate $\rho=0.86$ in~\eqref{eq:thm_2}.

We solve each MHE problem~\eqref{eq:MHE} using CasADi~\cite{Andersson2018} and IPOPT~\cite{Waechter2005}, where we employ a standard multiple shooting approach and integrate the system dynamics~\eqref{eq:sys_1} using the classic Runge-Kutta method with step size $t_{\Delta}=0.01$.
{The computations were performed on a standard laptop, which took at most $\tau_{\max} = 19.6 \, \mathrm{ms}$ per iterate of the MHE algorithm presented at the end of Section~\ref{sec:MHE_setup} for all sampling times $t_i\in\mathcal{T}$.}

\begin{figure*}[ht]
	\begin{minipage}[t]{0.48\textwidth}
		\flushright\flushbottom
		\includegraphics{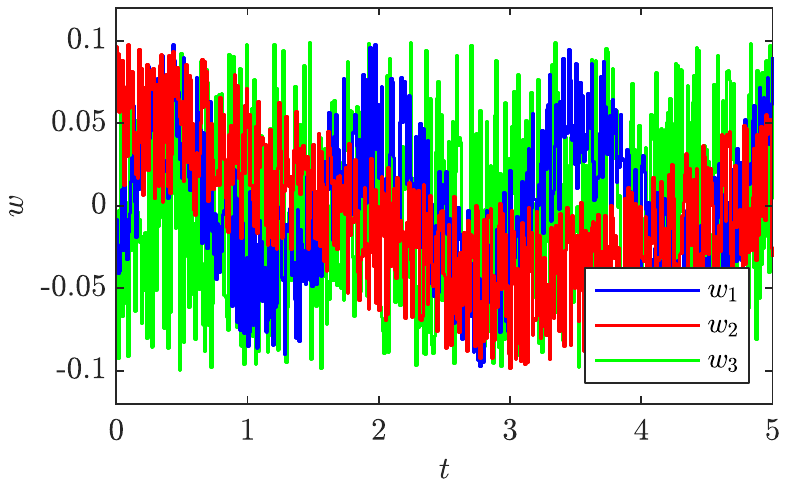}
	\end{minipage}
	\begin{minipage}[t]{0.488\textwidth}
		\flushright\flushbottom
		\includegraphics{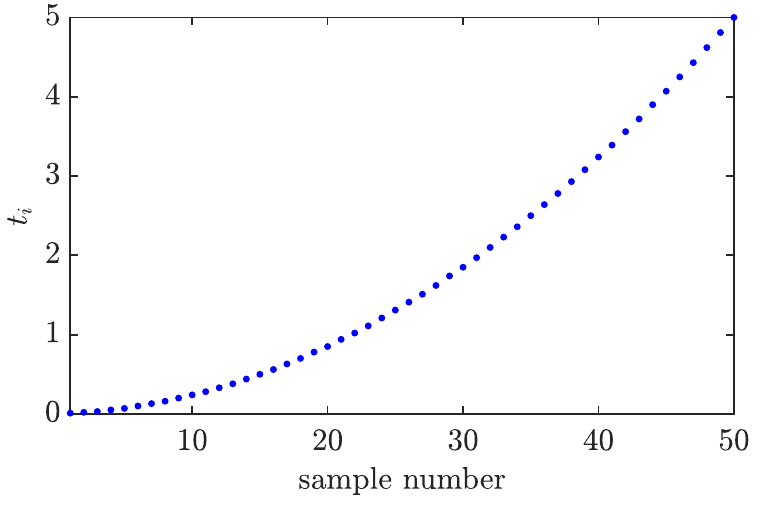}
	\end{minipage}
	
	\begin{minipage}[t]{0.485\textwidth}
		\flushright\flushbottom
		\includegraphics{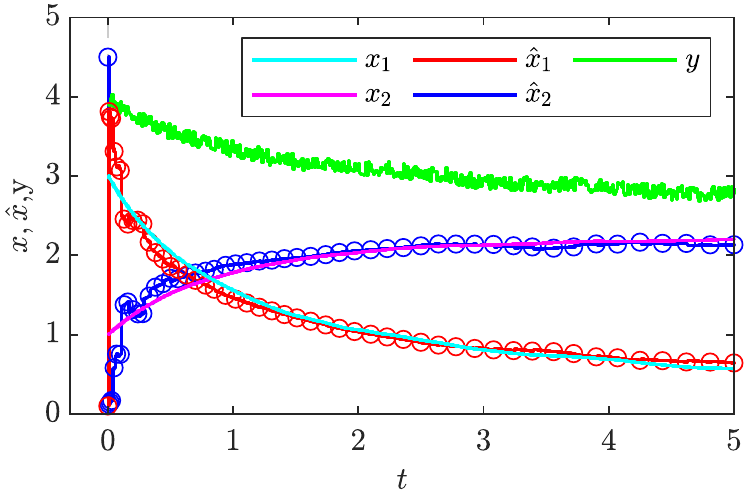}
	\end{minipage}
	\begin{minipage}[t]{0.48\textwidth}
		\flushright\flushbottom
		\includegraphics{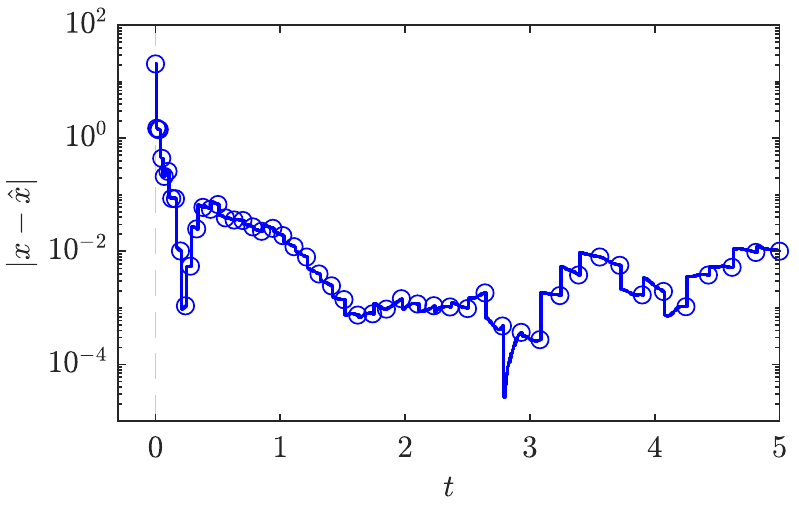}
	\end{minipage}
	\caption{Top left: disturbance signal $w$; top right: sampling times $t_i$ contained in the set $\mathcal{T}$; bottom left: comparison of the estimated trajectory $\hat{x}$~\eqref{eq:x_hat_CT}, the true system trajectory $x$, and measurements $y$; bottom right: corresponding estimation error. The circles in the bottom plots correspond to the estimates $\hat{x}(t_i)$ at the sampling times $t_i\in\mathcal{T}$.}
	\label{fig}
\end{figure*}
The estimation results and the corresponding estimation error are depicted in the bottom left and right plot in Figure~\ref{fig}, respectively.
This shows fast convergence of the error to a neighborhood around the origin, as guaranteed by Theorem~\ref{thm:MHE}.

\section{Conclusion}
In this paper, we presented an MHE scheme with discounted least squares objective for general nonlinear continuous-time systems.
Provided that the system is detectable (i-iIOSS) and admits a corresponding i-iIOSS Lyapunov function, we have shown that there exists a sufficiently long estimation horizon that guarantees robust global exponential stability of the estimation error in {a time-discounted} $L^2$-to-$L^\infty$ sense.
Moreover, we provided sufficient LMI conditions to construct i-iIOSS Lyapunov functions for certain classes of nonlinear systems, which can be efficiently verified via, e.g., SOS optimization or LPV embeddings.

While the overall robust stability analysis is based on similar ideas as in the discrete-time setting in~\cite{Schiller2023c}, the continuous-time MHE framework offers some significant advantages in practice.
{In particular, the sampling times at which the underlying optimization problem is solved can be chosen arbitrarily, which even allows the MHE scheme to be applied in an event-triggered fashion using a suitable triggering rule.
	Consequently, the proposed MHE scheme can be tailored to the problem at hand, which can yield more accurate results with less computational effort compared to standard equidistant sampling.}
Moreover, the proposed LMI conditions for computing an i-iIOSS Lyapunov function are independent of any sampling scheme and therefore less complex and easier to verify, especially when being compared to those from \cite[Th.~2, Cor.~3]{Schiller2023c} applied to the discretized system.

The applicability of the overall framework was illustrated using a nonlinear chemical reactor process.
Here, we verified the required detectability assumption by computing an i-iIOSS Lyapunov function, successfully applied the proposed continuous-time MHE scheme with non-equidistant sampling, and consequently obtained guaranteed robustly stable estimation results under practical conditions.

\subsection*{Funding}
This work was funded by the Deutsche Forschungsgemeinschaft (DFG, German Research Foundation) -- 426459964.

\appendix

\section{Proofs}

\subsection{Proof of Proposition~\ref{prop:RGES}}
\label{sec:proof_prop_RGES}

\begin{proof}
	The proof is straightforward and follows by noting that
	\begin{align*}
	&\ \int_{0}^{t_i}\eta^{t_i-\tau}\beta_{w}(|w(\tau)|)d\tau\\
	\leq&\ \int_{0}^{t_i}\eta^{t_i-\tau}d\tau\cdot \esssup\limits_{s\in[0,t_i]}\left\{\beta_{w}(|w(s)|)\right\}\\
	=&\  \frac{\eta^{t_i}-1}{\ln\eta}\beta_{w}(\|w\|_{0:t_i})\\
	\leq&\  -\frac{1}{\ln\eta}\beta_{w}(\|w\|_{0:t_i}).
	\end{align*}
	Consequently, using the fact that $a+b\leq\max\{2a,2b\}$ for $a,b\geq0$, the RGAS property from \eqref{eq:RGAS} implies that
	\begin{align}
	&\ \beta(|x(t_i,\chi,u,w)-\hat{x}(t_i)|)\nonumber\\
	\leq&\ \max\left\{2\beta_{x}(|\chi-\hat{\chi}|)\eta^{t_i}, -\frac{2}{\ln\eta}\beta_{w}(\|w\|_{0:t_i})\right\}, \label{eq:proof_RGAS}
	\end{align}
	which is equivalent to \eqref{eq:RGAS_sup} by setting $\psi(s,t) := \beta^{-1}(2\beta_x(s)\eta^{t_i})$ and $\gamma(s) := \beta^{-1}(-\frac{2}{\ln\eta}\beta_w(s))$.
	
	When the state estimator is additionally RGES, we can exploit in \eqref{eq:proof_RGAS} that $\beta(s)\geq C_1s^r$ and $\beta_x(s)\leq C_2 s^r$ with $C_1,C_2,r$ from Definition~\ref{def:RGAS}, which yields $\psi(s,t) \leq \big(2\frac{C_2}{C_1}\big)^{\frac{1}{r}}s\eta^{\frac{t_i}{r}}$ and $\gamma(s)\leq \big(-\frac{2}{C_1\ln\eta}\beta_w(s)\big)^\frac{1}{r}$ (recall that $s\mapsto s^{\frac{1}{r}}$ is strictly increasing in $s\geq0$).
	Consequently, we can choose $C:=\big(2\frac{C_2}{C_1}\big)^{\frac{1}{r}}$ and $\rho:=\eta^{\frac{1}{r}}\in[0,1)$. A suitable redefinition of $\gamma$ establishes our claim and hence concludes this proof.
\end{proof}

\subsection{Proof of Proposition~\ref{prop:MHE}}
\label{sec:proof_prop_MHE}

\begin{proof}
	Given any $t\geq0$ and its corresponding sampling time $t_i=k(t)$, let $l:=t-t_i+T_{t_i}\in[0,T_{t_i}]$ and recall that $\hat{x}(t) = \bar{x}_{t_i}^*(l) = x(l,\bar{\chi}^*_{t_i},u,\bar{w}^*_{t_i})$ by~\eqref{eq:x_hat_CT}.
	Since $\bar{x}_{t_i}^*$ satisfies~\eqref{eq:sys} on $[0,T_{t_i}]$ due to the constraints~\eqref{eq:MHE_1}-\eqref{eq:MHE_3}, we can invoke the i-iIOSS Lyapunov function from Assumption~\ref{ass:IOSS_Lyap}.
	To this end, we use that $x(t)=x(t,\chi,u,w) = x(l,x(t-T_{t_i}),u_{t_i},w_{t_i})$, where $w_{t_i}:[0,T_{t_i})\rightarrow\mathcal{M}_{\mathcal{W}}$ denotes the segment of $w$ in the interval $[t_i-T_{t_i},t_i)$ defined by $w_{t_i}(\tau) := w(t_i-T_{t_i}+\tau)$, $\tau\in[0,T_{t_i})$.
	Then, we can evaluate the dissipation inequality~\eqref{eq:IOSS_Lyap_2_ass} with the trajectories $x_1(l) = x(l,x(t_i-T_{t_i}),u_{t_i},w_{t_i})$ and $x_2(l) = \bar{x}^*_{t_i}(l) = x(l,\bar{\chi}^*_{t_i},u_{t_i},\bar{w}^*_{t_i})$ and the corresponding outputs $y_1(l) = y(l+t_i-T_{t_i}) = y_{t_i}(l)$ and $y_2(l) = \bar{y}_{t_i}^*(l)$, which yields
	\begin{align}
	&\ U(x(t),\hat{x}(t)) = U(x(t_i-T_{t_i}+l),\bar{x}_{t_i}^*(l))\nonumber\\
	\stackrel{\eqref{eq:IOSS_Lyap_2_ass}}{\leq}  &\ U(x(t_i-T_{t_i}),\bar{x}_{t_i}^*(0))\lambda^l \nonumber\\
	& + \int_{0}^{l} \lambda^{l-\tau}\big( |w_{t_i}(\tau)-\bar{w}_{t_i}^*(\tau)|_Q^2+|y_{t_i}(\tau)-\bar{y}_{t_i}^*(\tau)|_R^2\big)d\tau\nonumber\\
	\leq&\ \lambda^{-(t_i-t)}\Big(U(x(t_i-T_{t_i}),\bar{\chi}_{t_i}^*)\lambda^{T_{t_i}} \nonumber \\
	& \hspace{10ex} + \int_{0}^{T_{t_i}}\lambda^{T_{t_i}-\tau}\big( |w_{t_i}(\tau)-\bar{w}_{t_i}^*(\tau)|_Q^2\nonumber\\[-2ex]
	& \hspace{21ex} +|y_{t_i}(\tau)-\bar{y}_{t_i}^*(\tau)|_R^2\big)d\tau\Big), \label{eq:proof_th1_1}
	\end{align}
	where the last inequality follows by exploiting that $l\leq T_{t_i}$, the definition of $l$, and the fact that $\bar{x}_{t_i}^*(0)=\bar{\chi}_{t_i}^*$. Define
	\begin{align}
	\bar{U}:=&\ U(x(t_i-T_{t_i}),\bar{\chi}_{t_i}^*)\lambda^{T_{t_i}} \nonumber \\
	&\ + \int_{0}^{T_{t_i}}\lambda^{T_{t_i}-\tau}\big( |w_{t_i}(\tau)-\bar{w}_{t_i}^*(\tau)|_Q^2\nonumber\\[-1ex]
	& \hspace{15ex} +|y_{t_i}(\tau)-\bar{y}_{t_i}^*(\tau)|_R^2\big)d\tau. \label{eq:proof_th1_Uti}
	\end{align}
	By Cauchy-Schwarz and Young's inequality, we have
	\begin{equation}
	|w_{t_i}(\tau)-\bar{w}_{t_i}^*(\tau)|_Q^2 \leq 2|w_{t_i}(\tau)|_Q^2+2|\bar{w}_{t_i}^*(\tau)|_Q^2, \ \tau \in [0,T_{t_i}) \label{eq:proof_th1_2}
	\end{equation}
	and
	\begin{align}
	&\ \ U(x(t_i-T_{t_i}),\bar{\chi}_{t_i}^*)  \stackrel{\eqref{eq:IOSS_Lyap_1_ass}}{\leq} |x(t_i-T_{t_i})-\bar{\chi}_{t_i}^*|^2_{P_2} \nonumber\\
	=&\ |x(t_i-T_{t_i})-\hat{x}(t_i-T_{t_i})+\hat{x}(t_i-T_{t_i})-\bar{\chi}_{t_i}^*|^2_{P_2} \nonumber\\
	\leq &\ 2|x(t_i-T_{t_i})-\hat{x}(t_i-T_{t_i})|_{P_2}^2+2|\bar{\chi}_{t_i}^* - \hat{x}(t_i-T_{t_i})|^2_{P_2}. \label{eq:proof_th1_3}
	\end{align}
	Then, $\bar{U}$ in \eqref{eq:proof_th1_Uti} can be bounded using \eqref{eq:proof_th1_2}-\eqref{eq:proof_th1_3}, yielding 
	\begin{align}
	\bar{U} &\leq 2\lambda^{T_{t_i}}|x(t_i-T_{t_i})-\hat{x}(t_i-T_{t_i})|_{P_2}^2\nonumber\\
	&\hspace{2.5ex} +2\lambda^{T_{t_i}}|\bar{\chi}_{t_i}^* - \hat{x}(t_i-T_{t_i})|^2_{P_2} +  \hspace{-0.5ex}\int_0^{T_i}\hspace{-0.5ex}\lambda^{T_{t_i}-\tau}2|w_{t_i}(\tau)|_Q^2 d\tau \nonumber\\
	&\hspace{2.5ex}+ \int_0^{T_i}\lambda^{T_{t_i}-\tau} \big(2|\bar{w}_{t_i}^*(\tau)|_Q^2+|y_{t_i}(\tau)-\bar{y}_{t_i}^*(\tau)|_R^2\big)d\tau\nonumber\\
	&\stackrel{\eqref{eq:MHE_objective}}{=} 2\lambda^{T_{t_i}}|x(t_i-T_{t_i})-\hat{x}(t_i-T_{t_i})|_{P_2}^2\nonumber \\ 
	&\hspace*{2.5ex}+  2\int_0^{T_i}\lambda^{T_{t_i}-\tau}|w_{t_i}(\tau)|_Q^2d\tau + J(\bar{\chi}^*_{t_i},\bar{w}_{t_i}^*,\bar{y}_{t_i}^*,t_i). \label{eq:proof_th1_4}
	\end{align}
	By optimality, it follows that
	\begin{align}
	J(\bar{\chi}^*_{t_i},\bar{w}_{t_i}^*,\bar{y}_{t_i}^*,t_i)\leq&\ J(x(t_i-T_{t_i}),w_{t_i},y_{t_i},t_i)\nonumber\\
	= &\ 2\lambda^{T_{t_i}}|x(t_i-T_{t_i}) - \hat{x}(t_i-T_{t_i})|^2_{P_2}\nonumber\\
	&\  + 2\int_{0}^{T_{t_i}}\lambda^{T_{t_i}-\tau}|w_{t_i}(\tau)|_Q^2d\tau.\label{eq:proof_th1_5}
	\end{align}
	Hence, \eqref{eq:proof_th1_4} and \eqref{eq:proof_th1_5} lead to
	\begin{align}
	\bar{U} \leq&\ 4\lambda^{T_{t_i}}|x(t_i-T_{t_i})-\hat{x}(t_i-T_{t_i})|_{P_2}^2\nonumber\\
	&\ +  4\int_0^{T_i}\lambda^{T_{t_i}-\tau}|w_{t_i}(\tau)|_Q^2d\tau\nonumber\\
	=&\ 4\lambda^{T_{t_i}}|x(t_i-T_{t_i})-\hat{x}(t_i-T_{t_i})|_{P_2}^2\nonumber\\
	&\ +  4\int_{t_i-T_{t_i}}^{t_i}\lambda^{t_i-\tau}|w(\tau)|_Q^2d\tau, \label{eq:proof_th1_7}
	\end{align}
	where the last equality follows by a change of coordinates.
	In combination, we obtain
	\begin{align}
	&\ U(x(t),\hat{x}(t)) \stackrel{\eqref{eq:proof_th1_1},\eqref{eq:proof_th1_Uti}}{\leq} \lambda^{-(t_i-t)} \bar{U}\nonumber\\
	\stackrel{\eqref{eq:proof_th1_7}}{\leq}&\ \lambda^{-(t_i-t)}\Big( 4\lambda^{T_{t_i}}|x(t_i-T_{t_i})-\hat{x}(t_i-T_{t_i})|_{P_2}^2 \nonumber \\
	&\hspace{10ex} +  4\int_{t_i-T_{t_i}}^{t_i}\lambda^{t_i-\tau}|w(\tau)|_Q^2d\tau\Big).\label{eq:proof_ind_end}
	\end{align}
	Using $|x(t_i-T_{t_i})-\hat{x}(t_i-T_{t_i})|^2_{P_2} \leq\lambda_{\max}(P_2,P_1) |x(t_i-T_{t_i})-\hat{x}(t_i-T_{t_i})|_{P_1}^2$ and the first inequality in~\eqref{eq:IOSS_Lyap_1_ass} yields~\eqref{eq:thm_1}, which finishes this proof.
\end{proof}

\subsection{Proof of Theorem~\ref{thm:IOSS}}
\label{sec:proof_thm_IOSS}

\begin{proof}
	For any pair of points $x_1,x_2\in\mathcal{X}$, let $\Xi(x_1,x_2)$ denote the set of piecewise smooth curves $[0,1]\rightarrow\mathcal{X}$ connecting $x_1$ and $x_2$ such that $c\in\Xi(x_1,x_2)$ satisfies $c(0)=x_1$ and $c(1)=x_2$.
	
	Given any $\chi_1,\chi_2\in\mathcal{X}$, $u\in\mathcal{M}_{\mathcal{U}}$, and $w_1,w_2\in\mathcal{M}_{\mathcal{W}}$ satisfying Assumption~\ref{ass:IOSS}, we consider the trajectories $x_i(t) = x(t,\chi_i,u,w_i)$ and their output signals $y_i(t) = y(t,\chi_i,u,w_i)$, $t\geq0$, $i=1,2$.
	
	At any fixed time $t=t^\star\geq0$, let us consider the following smoothly parameterized paths for $s\in[0,1]$: the path of states $c(t,s)\in\Xi(x_1(t),x_2(t))$ and the paths of disturbances $\omega(t,s) = w_1(t) + s(w_2(t)-w_1(t))$.
	For $t\in[t^\star,t^\star+\epsilon)$ (with $\epsilon>0$ arbitrarily small to guarantee local existence of solutions and continuity of $u,w$ over $t\in[t^\star,t^\star+\epsilon)$) and each fixed $s\in[0,1]$, the path $c(t,s)$ evolves according to~\eqref{eq:sys} such that
	\begin{subequations}\label{eq:dIOSS_path_dyn}
		\begin{align}
		\dot{c}(t,s) &= f(c(t,s),u(t),\omega(t,s)), \label{eq:dIOSS_path_dyn_1}\\
		\zeta(t,s) &= h(c(t,s),u(t),\omega(t,s)), \label{eq:dIOSS_path_dyn_2}
		\end{align}
	\end{subequations}
	where $\zeta(t,s)$ satisfies $y_1(t)=\zeta(t,0)$ and $y_2(t)=\zeta(t,1)$ for each $t\in[t^\star,t^\star+\epsilon)$.
	Differentiating~\eqref{eq:dIOSS_path_dyn} with respect to $s\in[0,1]$ yields (after interchanging the order of differentiation of $t$ and $s$)
	\begin{subequations}\label{eq:IOSS_diff_dyn}
		\begin{align}
		\dot{\delta}_x 
		&= A\delta_x 
		+ B\delta_w, \label{eq:IOSS_diff_dyn_1}\\
		\delta_y 
		&= C\delta_x 
		+ D\delta_w \label{eq:IOSS_diff_dyn_2}
		\end{align}
	\end{subequations}
	for all $t\in[t^\star,t^\star+\epsilon)$ using the substitutions $\delta_x := dc/ds(t,s)$, $\delta_w := d\omega/ds(t,s)$, and $\delta_y := d\zeta/ds(t,s)$, and where the matrices $A,B,C,D$ are the linearizations of $f$ and $h$ as in \eqref{eq:IOSS_ABCD} evaluated at $(c(t,s),u(t),\omega(t,s))$.
	Assuming that $(c(t,s),u(t),\omega(t,s))\in\mathcal{X}\times\mathcal{U}\times\mathcal{W}$ for all $t\in[t^\star,t^\star+\epsilon)$, one can easily verify (by exploiting \eqref{eq:IOSS_diff_dyn}) that satisfaction of the point-wise LMI condition~\eqref{eq:IOSS_LMI} implies that
	\begin{equation}\label{eq:IOSS_diff}
	\frac{d}{dt}|\delta_x|^2_P \leq -\kappa |\delta_x|^2_P + |\delta_w|^2_Q + |\delta_y|^2_R
	\end{equation}
	for all $t\in[t^\star,t^\star+\epsilon)$. By integrating~\eqref{eq:IOSS_diff} over $s\in[0,1]$, interchanging integration and differentiation, and defining $E(c(t,s)):=\int_0^1|dc/ds(t,s)|^2_Pds$, we obtain
	\begin{align*}
	\dot{E}(c(t,s))
	\leq&\ -\kappa E(c(t,s)) \nonumber \\
	&\ + \int_0^1\left|\frac{d\omega}{ds}(t,s)\right|^2_Qds + \int_0^1\left|\frac{d\zeta}{ds}(t,s)\right|^2_Rds
	\end{align*}
	for $t\in[t^\star,t^\star+\epsilon)$. The integration over $t\in[t^\star,t^\star+\epsilon)$ yields
	\begin{align}
	&\ E(c(t^\star+\epsilon,s))-E(c(t^\star,s)) \nonumber \\
	\leq&\ \int_{t^\star}^{t^\star+\epsilon}-\kappa E(c(t,s)) \nonumber \\
	&\ + \int_0^1\left|\frac{d\omega}{ds}(t,s)\right|^2_Qds + \int_0^1\left|\frac{d\zeta}{ds}(t,s)\right|^2_Rds \,dt. \label{eq:IOSS_E}
	\end{align}
	Note that $E(c(t,s))$ can be interpreted as the Riemannian energy of the path $c(t,s)$. Since $P$ is constant and $\mathcal{X}$ is convex, the shortest path $\gamma(t,s)$ (with the minimum energy $E(\gamma(t,s))$ over all possible curves $c(t,s)\in\Xi(x_1(t),x_2(t))$) is, at each fixed $t\geq0$, always given by the straight line connecting $x_1(t)$ and $x_2(t)$, i.e., $\gamma(t,s) = x_1(t)+s(x_2(t)-x_1(t))$.
	Now let $c(t^\star,s) = \gamma(t^\star,s)$ in~\eqref{eq:IOSS_E} and note that $E(\gamma(t,s))\leq E(c(t,s))$ for all $t\in[t^\star,t^\star+\epsilon)$.
	Therefore, by taking $\epsilon\rightarrow0$, we have that	
	\begin{align*}
	\dot{E}(\gamma(t^\star,s))
	\leq&\ -\kappa E(\gamma(t^\star,s)) \nonumber \\
	&\ + \int_0^1\left|\frac{d\omega}{ds}(t^\star,s)\right|^2_Qds + \int_0^1\left|\frac{d\zeta}{ds}(t^\star,s)\right|^2_Rds.
	\end{align*}
	By construction of $\gamma$ and $\omega$ (in particular, the fact that their derivatives with respect to $s\in[0,1]$ are constant in $s\in[0,1]$), it follows that
	\begin{align*}
	&E(\gamma(t^\star,s))=\int_0^1\left|\frac{d\gamma}{ds}(t^\star,s)\right|^2_Pds =\left|x_1(t^\star)-x_2(t^\star)\right|^2_P,\\
	&\int_0^1\left|\frac{d\omega}{ds}(t^\star,s)\right|^2_Qds = \left|w_1(t^\star)-w_2(t^\star)\right|^2_Q,\\
	&\int_0^1\left|\frac{d\zeta}{ds}(t^\star,s)\right|^2_Rds = \left|y_1(t^\star)-y_2(t^\star)\right|^2_R,
	\end{align*}
	where the last equality follows since $h$ is affine in $(x,w)$ by Assumption~\ref{ass:IOSS} (which implies that $C$ and $D$ are constant in $s$).
	Since $t^\star\geq0$ was arbitrary, using the definition $U(x_1,x_2):=|x_1-x_2|^2_P$ we can infer that
	\begin{align*}
	\dot{U}(x_1(t),x_2(t)) \leq&\ -\kappa U(x_1(t),x_2(t)) \\
	&\ + \left|w_1(t)-w_2(t)\right|^2_Q + \left|y_1(t)-y_2(t)\right|^2_R
	\end{align*}
	for each $t\geq0$.
	Note that $U$ satisfies~\eqref{eq:IOSS_Lyap_1_ass} with $P_1=P_2=P$.
	In what follows, we will show that $U$ also satisfies~\eqref{eq:IOSS_Lyap_2_ass}.
	To this end, let $v:\mathbb{R}_{\geq0}\rightarrow\mathbb{R}_{\geq0}$ be the solution of the initial value problem
	\begin{align*}
	\dot{v}(t) &= -\kappa v(t) + \left|w_1(t)-w_2(t)\right|^2_Q + \left|y_1(t)-y_2(t)\right|^2_R\\
	v(0) &= U(\chi_1,\chi_2).
	\end{align*}
	Solving for $v$ yields
	\begin{align*}
	v(t) = e^{-\kappa t}v(0) + \int_{0}^{t}e^{-\kappa(t-\tau)}\big(&\left|w_1(t)-w_2(t)\right|^2_Q \\
	&+ \left|y_1(t)-y_2(t)\right|^2_R\big)d\tau.
	\end{align*}
	From the standard comparison theorem, we know that $U(x_1(t),x_2(t))\leq v(t)$ for all $t\geq 0$, which establishes~\eqref{eq:IOSS_Lyap_2_ass} with $\lambda=e^{-\kappa}$. 
	Consequently, $U$ is a quadratic (and hence smooth) i-iIOSS Lyapunov function that satisfies Assumption~\ref{ass:IOSS_Lyap}, which finishes this proof.
\end{proof}

\addtolength{\textheight}{-5.4cm}

\end{document}